\title{Single-Exponential FPT Algorithms for Enumerating Secluded $\mathcal{F}$-Free Subgraphs and Deleting to Scattered Graph Classes}
\titlerunning{Single-Exponential FPT Algorithms for Enumerating Secluded $\mathcal{F}$-Free Subgraphs} 
\author{Bart M.\,P. Jansen 
}{Eindhoven University of Technology, The Netherlands}{b.m.p.jansen@tue.nl}{https://orcid.org/0000-0001-8204-1268}{}
\author{Jari J. H. {de Kroon}}{Eindhoven University of Technology, The Netherlands}{j.j.h.d.kroon@tue.nl}{https://orcid.org/0000-0003-3328-9712}{}
\author{Micha\l{} W\l{}odarczyk}{University of Warsaw, Poland}{m.wlodarczyk@mimuw.edu.pl}{https://orcid.org/0000-0003-0968-8414}{}
\authorrunning{B.M.P. Jansen, J.J.H. de Kroon, and M. W\l{}odarczyk}
\keywords{fixed-parameter tractability, important separators, secluded subgraphs}
\newcommand{\defparproblem}[4]{
	\vspace{2mm}
	\noindent\fbox{
		\begin{minipage}{0.96\linewidth}
			\begin{tabular*}{\linewidth}{@{\extracolsep{\fill}}lr} \textsc{#1} & {\bf{Parameter:}} #3 \\ \end{tabular*}
			{\bf{Input:}} #2 \\
			{\bf{Task:}} #4
		\end{minipage}
	}
	\vspace{2mm}
}
\newcommand{\defparquestion}[4]{
	\vspace{2mm}
	\noindent\fbox{
		\begin{minipage}{0.96\linewidth}
			\begin{tabular*}{\linewidth}{@{\extracolsep{\fill}}lr} \textsc{#1} & {\bf{Parameter:}} #3 \\ \end{tabular*}
			{\bf{Input:}} #2 \\
			{\bf{Question:}} #4
		\end{minipage}
	}
	\vspace{2mm}
}
\newcommand{\kopt}{$k$-locally irredundant\xspace}
\newcommand{\lambdastar}{\lambda^{\mathrm{L}}}
\newcommand{\Oh}{\mathcal{O}}
\newcommand{\hh}{\ensuremath{\mathcal{H}}}
\newcommand{\ff}{\ensuremath{\mathcal{F}}}
\newcommand{\sm}{\setminus}
\newcommand{\hhdel}{\textsc{$\hh$-deletion}\xspace}
\ifdefined\DEBUG{}
\newcommand{\mic}[1]{{\color{blue}{#1}}}
\def\rem#1{{\marginpar{\raggedright\scriptsize #1}}}
\newcommand{\micr}[1]{\rem{\textcolor{blue}{\(\bullet \) #1}}}
\newcommand{\bmp}[1]{{\color{purple}{#1}}}
\newcommand{\bmpr}[1]{\rem{\textcolor{purple}{\(\bullet \) #1}}}
\newcommand{\jjh}[1]{{\color{orange}{#1}}}
\newcommand{\jjhr}[1]{\rem{\textcolor{orange}{\(\bullet \) #1}}}
\newcommand{\mic}[1]{#1}
\newcommand{\bmp}[1]{#1}
\newcommand{\jjh}[1]{#1}
\newcommand{\micr}[1]{ }
\newcommand{\bmpr}[1]{ }
\newcommand{\jjhr}[1]{ }
\def\SHORTVERSION{true}
\ifdefined\SHORTVERSION{}
\newcommand{\shortv}[1]{#1}
\newcommand{\longv}[1]{}
\newcommand{\shortv}[1]{}
\newcommand{\longv}[1]{#1}
\begin{document}

\maketitle

\begin{abstract}
The celebrated notion of important separators bounds the number of small $(S,T)$-separators in a~graph which are `farthest from~$S$' in a technical sense.  
In this paper, we introduce a generalization of this powerful algorithmic primitive, tailored to undirected graphs, that is phrased in terms of \emph{$k$-secluded} vertex sets: sets with an open neighborhood of size at most~$k$.

In this terminology, the bound on 
important separators says that there are at most~$4^k$ maximal $k$-secluded connected vertex sets~$C$ containing~$S$ but disjoint from~$T$. We generalize this statement significantly:
even when we demand that~$G[C]$ avoids a finite set~$\mathcal{F}$ of forbidden induced subgraphs, the number of such maximal subgraphs is~$2^{\mathcal{O}(k)}$ and they can be enumerated efficiently. This enumeration algorithm allows us to make significant improvements for two problems from the~literature.

Our first application concerns the \textsc{Connected $k$-Secluded $\mathcal{F}$-free subgraph} problem, where $\mathcal{F}$ is a finite set of forbidden induced subgraphs. Given a graph in which each vertex has a~positive integer weight, the problem asks to find a maximum-weight connected $k$-secluded vertex set~$C \subseteq V(G)$ such that~$G[C]$ does not contain an induced subgraph isomorphic to any~$F \in \mathcal{F}$. The parameterization by~$k$ is known to be solvable in triple-exponential time via the technique of recursive understanding, which we improve to single-exponential.

Our second application concerns the deletion problem to \emph{scattered graph classes}. A scattered graph class is defined by demanding that every connected component is contained in at least one of the prescribed graph classes~$\Pi_1, \ldots, \Pi_d$. The deletion problem to a scattered graph class is to find a vertex set of size at most~$k$ whose removal yields a graph from the class. We obtain a~single-exponential algorithm whenever each class~$\Pi_i$ is characterized by a finite number of forbidden induced subgraphs. This generalizes and improves upon earlier results in the literature.

\end{abstract}
\clearpage

\section{Introduction} \label{sec:intro}


Graph separations have played a central role in algorithmics since the discovery of min-cut/max-flow duality and the polynomial-time algorithm to compute a maximum flow~\cite{FordF56}. Nowadays, more complex separation properties are crucial in the study of parameterized complexity, where the goal is to design algorithms for NP-hard problems whose running time can be bounded as~$f(k) \cdot n^{\Oh(1)}$ for some function~$f$ that depends only on the \emph{parameter~$k$} of the input. There are numerous graph problems which either explicitly involve finding separations of a certain kind (such as \textsc{Multiway Cut}~\cite{Marx06}, \textsc{Multicut}~\cite{BousquetDT18,MarxR14}, \textsc{$k$-Way Cut}~\cite{KawarabayashiT11}, and \textsc{Minimum Bisection}~\cite{CyganLPPS19}) or in which separation techniques turn out to be instrumental for an efficient solution (such as \textsc{Directed Feedback Vertex Set}~\cite{ChenLLOR08} and \textsc{Almost 2-SAT}~\cite{RazgonO09}).

The field of parameterized complexity has developed a robust toolbox of techniques based on graph separators, e.g., treewidth reduction~\cite{MarxOR13}, important separators~\cite{Marx11}, shadow removal~\cite{MarxR14}, discrete relaxations~\cite{CyganPPW13,Guillemot11a,IwataWY16,IwataYY18},  protrusion replacement~\cite{Misra16a}, randomized contractions and recursive understanding~\cite{ChitnisCHPP16,CyganKLPPSW21,LokshtanovR0Z18}, and flow augmentation~\cite{KimKPW21,KimKPW22}. These powerful techniques allowed a large variety of graph separation problems to be classified as fixed-parameter tractable. However, this power comes at a cost. The running times for many applications of these techniques are superexponential: of the form~$2^{p(k)} \cdot n^{\Oh(1)}$ for a high-degree polynomial~$p$, double-exponential, or even worse. Discrete relaxations form a notable exception, which we discuss in Section~\ref{sec:conclusion}.

The \mic{new} algorithmic \jjh{primitive} we develop can be seen \jjh{as an extension of} important separators~\cite{Marx11}~\cite[\S 8]{CyganFKLMPPS15}. The study of important separators was pioneered by Marx~\cite{Marx06,Marx11} and refined by follow-up work by several authors~\cite{ChenLL09,LokshtanovM13}, which was recognized by the EATCS-IPEC Nerode Prize 2020~\cite{BodlaenderWW20}. The technique is used to bound the number of extremal $(S,T)$-separators in an $n$-vertex graph~$G$ with vertex sets~$S$ and~$T$. The main idea is that, even though the number of distinct inclusion-minimal $(S,T)$-separators (which are vertex sets potentially intersecting~$S \cup T$) of size at most~$k$ can be as large as~$n^{\Omega(k)}$, the number of \emph{important} separators which leave a maximal vertex set reachable from~$S$, is bounded by~$4^k$. 
For \textsc{Multiway Cut}, a pushing lemma~\cite[Lem.~6]{Marx06} shows that there is always an optimal solution that contains an important separator, which leads to an algorithm solving the problem in time~$2^{\Oh(k)} \cdot n^{\Oh(1)}$. Important separators also form a key ingredient for solving many other problems such as \textsc{Multicut}~\cite{BousquetDT18,MarxR14} and \textsc{Directed Feedback Vertex Set}~\cite{ChenLLOR08}.

For our purposes, it will be convenient to view the bound on the number of important separators through the lens of \emph{secluded subgraphs}.

\mic{
\begin{definition} \label{def:seclusion:maximal}
A vertex set~$S \subseteq V(G)$ or induced subgraph~$G[S]$ of an undirected graph~$G$ is said to be {\em $k$-secluded} if~$|N_G(S)| \leq k$, that is, the number of vertices outside~$S$ which are adjacent to a vertex of~$S$ is bounded by~$k$.

{A vertex set~$S$ in a graph~$G$ is called \emph{seclusion-maximal} with respect to a~certain property~$\Pi$ if~$S$ satisfies~$\Pi$ and for all sets~$S' \supsetneq S$ that satisfy~$\Pi$ we have~$|N_G(S')| > |N_G(S)|$.}
\end{definition}
}

Hence a~seclusion-maximal set with property $\Pi$ is inclusion-maximal among all subsets with the same size neighborhood. Consequently, the number of inclusion-maximal $k$-secluded sets satisfying 
$\Pi$ is at most the number of seclusion-maximal $k$-secluded sets with that~property.
 
Using the terminology of seclusion-maximal subgraphs, the bound on the number of important~$(S,T)$-separators of size at most~$k$ in a graph~$G$ is equivalent to the following statement: in~the~graph~$G'$ obtained from~$G$ by inserting a new source~$r$ adjacent to~$S$, the number of \emph{seclusion-maximal} $k$-secluded connected subgraphs~$C$ containing~$r$ but no vertex of~$T$ is bounded~by~$4^k$. The neighborhoods of such subgraphs~$C$ correspond exactly to the important~$(S,T)$-separators~in~$G$.

While a number of previously studied cut problems~\cite{LokshtanovPSSZ20,MarxOR13} place further restrictions on the vertex set that forms the separator (for example, requiring it to induce a connected graph or independent set) our generalization instead targets the structure of the $k$-secluded connected subgraph~$C$. We will show that, for any fixed finite family~$\mathcal{F}$ of graphs, the number of $k$-secluded connected subgraphs~$C$ as above which are seclusion-maximal with respect to satisfying the additional requirement that~$G[C]$ contains no induced subgraph isomorphic to a member of~$\mathcal{F}$ is still bounded by~$2^{\Oh(k)}$. Observe that the case~$\mathcal{F} = \emptyset$ corresponds to the original setting of important separators. Note that a priori, it is not even clear that the number of seclusion-maximal graphs of this form can be bounded by any function~$f(k)$, let alone a single-exponential~one. 

\paragraph*{Our contribution}
Having introduced the background of secluded subgraphs, we continue by stating our result exactly. This will be followed \jjh{by} a discussion on its applications.

For a finite set~$\mathcal{F}$ of graphs we define~$||\mathcal{F}|| := \max_{F \in \mathcal{F}} |V(F)|$, the maximum order of any graph in~$\mathcal{F}$. We say that a graph is~$\mathcal{F}$-free if it does not contain an \emph{induced} subgraph isomorphic to a~graph in~$\mathcal{F}$. Our generalization of important separators is captured by the following theorem, in which we use~$\Oh_{\mathcal{F}}(\ldots)$ to indicate that the hidden constant depends~on~$\mathcal{F}$.

\begin{restatable}{theorem}{thmSecludedFFreeEnum}\label{thm:secluded-F-free-enum}
Let~$\mathcal{F}$ be a finite set of 
graphs. For any $n$-vertex graph~$G$, non-empty vertex set~$S \subseteq V(G)$, potentially \bmp{empty~$T \subseteq V(G) \setminus S$}, and integer~$k$, the number of $k$-secluded \bmp{induced subgraphs~$G[C]$ which are} seclusion-maximal \bmp{with respect to being} connected, $\mathcal{F}$-free, and satisfying~$S \subseteq C \subseteq V(G) \setminus T$, is bounded by~$2^{\Oh_{\ff}(k)}$. A superset of size $2^{\Oh_{\ff}(k)}$ of these subgraphs can be enumerated in time~$2^{\Oh_{\ff}(k)} \cdot n^{||\mathcal{F}||+\Oh(1)}$ and polynomial space.
\end{restatable}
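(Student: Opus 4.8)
The plan is to phrase the task as a bounded-depth recursive enumeration on instances $(S,T,k)$ and to control it by the single potential $\mu = 2k - \lambda_G(S,T)$, where $\lambda_G(S,T)$ is the minimum size of an $(S,T)$-separator. The starting observation is that an admissible $C$ is determined by its neighborhood: since $G[C]$ is connected and $S \subseteq C$, the set $C$ is exactly the connected component of $S$ in $G - N_G(C)$, and $N_G(C)$ is an $(S,T)$-separator of size at most~$k$. Hence we may assume $\lambda_G(S,T) \le k$ (otherwise no admissible $C$ exists), so $0 \le \mu \le 2k$. For $\mathcal{F} = \emptyset$ this is precisely the important-separator setting, where the classical recursion already yields $2^{\Oh(k)}$ seclusion-maximal sets; the goal is to fold the $\mathcal{F}$-free requirement into the same recursion without inflating~$\mu$.

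At a node $(S,T,k)$ I would compute a minimum $(S,T)$-separator together with its inclusion-minimal source side $P$ (the vertices reachable from $S$ after deleting the closest-to-$S$ minimum separator); by submodularity this $P$ is contained in every source side of a minimum separator. The recursion then splits on whether $G[P]$ is $\mathcal{F}$-free. If it is, the $\mathcal{F}$-constraint is vacuous throughout~$P$, so I fall back to the classical important-separator branching (the $\mathcal{F} = \emptyset$ case), which either deletes a separator vertex (decrementing~$k$) or advances a frontier vertex into~$S$ (raising~$\lambda$), emitting $P$ itself whenever it is $k$-secluded. The new ingredient is the other case: if $G[P]$ contains a forbidden induced copy~$X$ (locatable in time $n^{\|\mathcal{F}\|}$), then every admissible $C$ satisfies $X \not\subseteq C$, so some $v \in X \setminus S$ lies outside $C$ (if $X \subseteq S$ the branch is infeasible). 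I branch over the at most $\|\mathcal{F}\|$ such vertices $v$, and for each over the two ways $v$ can avoid $C$: either $v \in N_G(C)$, handled by deleting $v$ and recursing with budget $k-1$; or $v \notin C \cup N_G(C)$, handled by moving $v$ into $T$.

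The crux, and the step I expect to be the main obstacle, is showing that the second sub-branch makes genuine progress, i.e.\ that adding $v$ to $T$ strictly increases the minimum separator size. This is exactly where locating the copy inside $P$ is essential: because $v \in P$ lies in the minimal source side, a submodularity argument gives $\lambda_G(S, T \cup \{v\}) > \lambda_G(S,T)$; otherwise a minimum $(S, T \cup \{v\})$-separator would be a minimum $(S,T)$-separator whose source side excludes~$v$, contradicting that $P$ is contained in every such source side. Consequently every branch decreases $\mu$ by at least one while the branching factor stays $\Oh_{\mathcal{F}}(1)$, bounding the search tree by $(2\|\mathcal{F}\|)^{2k} = 2^{\Oh_{\mathcal{F}}(k)}$.

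Finally I would verify exhaustiveness and the running time. For correctness I check that each admissible seclusion-maximal $C$ survives into exactly the branch matching the position of its chosen witness vertex, and that seclusion-maximality is preserved under the operations used: deleting a neighborhood vertex lowers every competitor's neighborhood by at most one, while restricting the universe (adding to $T$, or to $S$) only removes competitors, so no seclusion-maximal $C$ is lost; since we are allowed to emit a superset, overlap between branches is harmless. Each node performs a min-cut computation and a forbidden-copy search costing $n^{\|\mathcal{F}\| + \Oh(1)}$, and traversing the tree in depth-first order uses polynomial space, giving the claimed $2^{\Oh_{\mathcal{F}}(k)} \cdot n^{\|\mathcal{F}\| + \Oh(1)}$ time. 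The delicate points to pin down are the precise definition and uniqueness of the minimal source side, and the interleaving of the two cases so that the $\mathcal{F}$-free pushing step and the forbidden-copy destruction share the single potential~$\mu$.
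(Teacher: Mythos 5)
There is a genuine gap, and it sits exactly at the step you flag as delicate but then wave through: the claim that once $G[P]$ is $\mathcal{F}$-free (for $P$ the source side of the closest minimum separator) ``the $\mathcal{F}$-constraint is vacuous throughout $P$'' and the classical important-separator branching applies. The exhaustiveness of that classical step rests on every target set $C$ containing the reachable set $R_G(S,P^+)$ of the \emph{farthest} minimum separator, so that each $v \in P^+$ lies in $N_G[C]$ and the two sub-branches (delete $v$ / add $v$ to $S$) cover all cases. Submodularity does give $|N_G(C \cup R_G(S,P^+))| \leq |N_G(C)|$, but it does not give that $G[C \cup R_G(S,P^+)]$ is $\mathcal{F}$-free: a forbidden induced copy can straddle $C$ and $R_G(S,P^+) \setminus C$ even when $G[C]$, $G[R_G(S,P^+)]$, and the set $G[R_G(S,P^-)]$ you actually test are each individually $\mathcal{F}$-free (already for $\mathcal{F}=\{K_3\}$: two triangle vertices in $C$, the third in the pushed region). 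So a seclusion-maximal $\mathcal{F}$-free $C$ need not contain the pushed set, a vertex of $P^+$ can lie entirely outside $N_G[C]$, and that $C$ is lost. Adding a third sub-branch ``$v \to T$'' does not repair this, because for $v \in P^+$ that move does not raise $\lambda$, so your potential $\mu$ stalls. Your other branching step (a whole forbidden copy inside $R_G(S,P^-)$) is sound --- moving such a $v$ into $T$ does raise $\lambda$ by the closest-separator characterization --- but it need never fire, since forbidden copies need not ever sit entirely inside the closest source side.

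This missing push step is where essentially all of the paper's work goes. Before pushing, the paper first branches away every ``tight enrichment'' that is adjacent to $S$ and every one whose addition to $T$ would raise the cut value; only then does it take the farthest minimum separator $P$ with respect to $T$ \emph{together with the union of all remaining enrichments}, and it proves that $G[C \cup R_G(S,P)]$ stays $\mathcal{F}$-free by a rearrangement argument on induced-subgraph isomorphisms (choosing $\phi$ minimizing $|\phi(F)\setminus S|$ and re-routing whole components of $F$ into $G[S]$). Handling disconnected members of $\mathcal{F}$ forces the bookkeeping of \emph{partial} forbidden graphs and a third term in the potential --- the number of partial forbidden graphs already realized inside $G[S]$ --- because the step that absorbs an enrichment into $S$ changes neither $k$ nor $\lambda$; your two-term potential $2k-\lambda$ cannot account for that step, and the paper moreover only gets a measure decrease over every \emph{two} consecutive calls, not every call. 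Your overall frame (recursion on $(S,T,k)$, closest/farthest separators, $v \in N_G(C)$ versus $v \to T$ dichotomy) matches the paper's, but without the enrichment machinery and the $\mathcal{F}$-freeness of the pushed union, the argument does not go through.
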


The single-exponential bound given by the theorem is best-possible in several ways. Existing lower bounds on the number of important separators~\cite[Fig. 8.5]{CyganFKLMPPS15} imply that even when~$\mathcal{F} = \emptyset$ the bound cannot be improved to~$2^{o(k)}$. The term~$n^{||\mathcal{F}||}$ in the running time is unlikely to be avoidable, 
since even testing whether a single graph is~$\mathcal{F}$-free is equivalent to \textsc{Induced Subgraph Isomorphism} and cannot be done in time~$n^{o(||\mathcal{F}||)}$~\cite[Thm. 14.21]{CyganFKLMPPS15} assuming the Exponential Time Hypothesis (ETH) due to lower bounds for \textsc{$k$-Clique}.

The polynomial space bound applies to the internal space usage of the algorithm, as the output size may be exponential in~$k$. More precisely, we consider polynomial-space algorithms equipped with a command that outputs an element and we require that for each element in the enumerated set, this command is called at least once. 
\mic{The algorithm could also enumerate just the set in question (rather than its superset) by postprocessing the output and comparing each pair of enumerated subgraphs. However, storing the entire output requires exponential space.}

{By executing the enumeration algorithm for every singleton set $S$ of the form $\{v\}$, $v \in V(G)$, and $T = \emptyset$, we immediately obtain the following.}

\begin{corollary}\label{cor:secluded-F-free-enum-all}
Let~$\mathcal{F}$ be a finite set of 
graphs. For any $n$-vertex graph~$G$ and integer~$k$, the number of $k$-secluded {induced subgraphs~$G[C]$ which are} seclusion-maximal {with respect to being} connected and $\mathcal{F}$-free is~$2^{\Oh_{\ff}(k)} \cdot n$. A superset of size $2^{\Oh_{\ff}(k)} \cdot n$ of these subgraphs can be enumerated in time~$2^{\Oh_{\ff}(k)} \cdot n^{||\mathcal{F}||+\Oh(1)}$ and polynomial space.
\end{corollary}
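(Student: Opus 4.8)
The plan is to reduce Corollary~\ref{cor:secluded-F-free-enum-all} directly to Theorem~\ref{thm:secluded-F-free-enum} by a simple averaging over starting vertices. The key observation is that every seclusion-maximal $k$-secluded connected $\mathcal{F}$-free induced subgraph~$G[C]$ is non-empty (as an inclusion-maximal object it certainly contains at least one vertex), so it contains at least one vertex~$v \in C$. Hence if I run the enumeration algorithm of Theorem~\ref{thm:secluded-F-free-enum} once for each singleton $S = \{v\}$ with $v \in V(G)$ and $T = \emptyset$, every such subgraph~$G[C]$ will be seclusion-maximal among the connected $\mathcal{F}$-free $k$-secluded subgraphs containing~$\{v\}$ (for each of its vertices~$v$), and will therefore be produced by at least one of the~$n$ invocations.

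First I would argue correctness of the covering. Fix a seclusion-maximal $k$-secluded connected $\mathcal{F}$-free set~$C$ and pick any~$v \in C$. I claim~$C$ is also seclusion-maximal with respect to being connected, $\mathcal{F}$-free, and satisfying~$\{v\} \subseteq C \subseteq V(G)$; indeed, seclusion-maximality in Definition~\ref{def:seclusion:maximal} is a condition quantified over all strictly larger sets~$C' \supsetneq C$ satisfying the same property, and restricting attention to those~$C'$ that additionally contain the fixed vertex~$v \in C$ only weakens the quantifier, so maximality in the unconstrained family is preserved in the $S = \{v\}$, $T = \emptyset$ family. Consequently Theorem~\ref{thm:secluded-F-free-enum}, applied with~$S = \{v\}$, outputs a superset containing~$C$. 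Taking the union of the outputs over all~$n$ choices of~$v$ therefore yields a superset of the desired collection.

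Next I would bound the size and running time. Each of the~$n$ invocations returns a superset of size~$2^{\Oh_{\ff}(k)}$ and runs in time~$2^{\Oh_{\ff}(k)} \cdot n^{||\mathcal{F}||+\Oh(1)}$ using polynomial space, by Theorem~\ref{thm:secluded-F-free-enum}. Summing over the~$n$ choices of the start vertex~$v$, the total number of enumerated subgraphs is~$n \cdot 2^{\Oh_{\ff}(k)} = 2^{\Oh_{\ff}(k)} \cdot n$, which also upper-bounds the number of distinct seclusion-maximal subgraphs we are counting. The total running time is~$n \cdot 2^{\Oh_{\ff}(k)} \cdot n^{||\mathcal{F}||+\Oh(1)} = 2^{\Oh_{\ff}(k)} \cdot n^{||\mathcal{F}||+\Oh(1)}$, and since the invocations are performed sequentially and their outputs streamed rather than stored, the space usage remains polynomial.

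I do not expect a genuine obstacle here, as the corollary is a routine consequence of the theorem; the only point requiring slight care is the quantifier argument showing that restricting to subgraphs through a fixed vertex preserves seclusion-maximality, which is where one could be tempted to conflate inclusion-maximality with seclusion-maximality. Since all the heavy lifting---the single-exponential bound and the enumeration---is already delivered by Theorem~\ref{thm:secluded-F-free-enum}, the proof amounts to this covering observation together with the summation of the stated resource bounds over the~$n$ starting vertices.
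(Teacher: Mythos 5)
Your proposal is correct and follows exactly the paper's (one-line) derivation: invoke Theorem~\ref{thm:secluded-F-free-enum} with $S=\{v\}$ and $T=\emptyset$ for every $v \in V(G)$ and sum the bounds. Your added remark that restricting to supersets containing a fixed $v \in C$ only weakens the quantifier in the definition of seclusion-maximality is the right justification for the covering step, which the paper leaves implicit.
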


{Note that we require that the set $\ff$ of forbidden induced subgraphs is finite. This is}
necessary in order to obtain a bound independent of~$n$ in \cref{thm:secluded-F-free-enum}. For example, the number of seclusion-maximal $(k=1)$-secluded connected subgraphs~$C$ containing a prescribed vertex~$r$ for which~$C$ induces an acyclic graph is already as large as~$n-1$ in a graph consisting of a single cycle, since each way of omitting a vertex other than~$r$ gives such a subgraph. For this case, the forbidden induced subgraph characterization~$\mathcal{F}$ consists of all cycles. Extending this example to a flower structure of~$k$ cycles of length~$n/k$ pairwise intersecting only in~$r$ shows that the number of seclusion-maximal $k$-secluded $\mathcal{F}$-free connected subgraphs containing~$r$ is~$\Omega(n^k / k^k)$ and cannot be bounded by~$f(k) \cdot n^{\Oh(1)}$ for any function~$f$.

{We give two applications of~\cref{thm:secluded-F-free-enum} to improve the running time of existing super-exponential (or even triple-exponential) parameterized algorithms to single-exponential, which is optimal under ETH. For each application, we start by presenting some context.}

\paragraph*{Application I: Optimization over connected $k$-secluded $\mathcal{F}$-free subgraphs} The computation of secluded versions of graph-theoretic objects such as paths~\cite{BevernFT20,ChechikJPP17,LuckowF20}, trees~\cite{DonkersJK22}, Steiner trees~\cite{FominGKK17}, or feedback vertex sets~\cite{BevernFMMSS18}, has attracted significant attention over recent years. This task becomes hard already for detecting $k$-secluded disconnected sets satisfying very simple properties.
In particular, detecting \jjh{a} $k$-secluded independent set of size~$s$ is W[1]-hard when parameterized by~$k+s$~\cite{BevernFMMSS18}. 

Golovach, Heggernes, Lima, and Montealegre~\cite{GolovachHLM20} suggested then to focus on \emph{connected} $k$-secluded subgraphs and studied the problem of finding one, which belongs to a graph class $\hh$, of maximum total weight. They therefore studied the \textsc{Connected $k$-secluded $\mathcal{F}$-free subgraph} problem for a finite family~$\mathcal{F}$ of forbidden induced subgraphs. Given an undirected graph~$G$ in which each vertex~$v$ has a positive integer weight~$w(v)$, and an integer~$k$, the problem is to find a maximum-weight connected $k$-secluded vertex set~$C$ for which~$G[C]$ is $\mathcal{F}$-free. They presented an algorithm based on recursive understanding to solve the problem in time~$2^{2^{2^{\Oh_\mathcal{F}(k \log k)}}} \cdot n^{\Oh_\mathcal{F}(1)}$. 
{We improve the dependency on $k$ to single-exponential.}
\begin{restatable}{corollary}{corConnectedFFree}\label{cor:connected:ffree}
For each fixed finite family~$\mathcal{F}$, \textsc{Connected $k$-secluded $\mathcal{F}$-free subgraph} can be solved in time~$2^{\Oh_{\mathcal{F}}(k)} \cdot n^{||\mathcal{F}|| + \Oh(1)}$ and polynomial space.
\end{restatable}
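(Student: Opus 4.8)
The plan is to reduce \textsc{Connected $k$-secluded $\mathcal{F}$-free subgraph} to the enumeration guaranteed by \cref{thm:secluded-F-free-enum}. The crucial observation is that the optimization problem asks for a maximum-\emph{weight} connected $k$-secluded $\mathcal{F}$-free vertex set~$C$, and since all weights are positive, an optimal solution is never a proper subset of another feasible solution with the same size neighborhood---because adding vertices strictly increases the weight without increasing~$|N_G(C)|$. More precisely, any optimal solution~$C^\star$ can be assumed to be seclusion-maximal with respect to being connected and $\mathcal{F}$-free: if it were not, there would be a feasible superset~$C' \supsetneq C^\star$ with~$|N_G(C')| \le |N_G(C^\star)| \le k$, and since~$w$ is positive-valued, $w(C') > w(C^\star)$, contradicting optimality. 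Hence it suffices to inspect only seclusion-maximal secluded subgraphs.

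First I would handle the connectivity of the target set. A connected set~$C$ is nonempty and spans a single component; to pin it down with \cref{thm:secluded-F-free-enum}, which requires a nonempty anchor set~$S \subseteq C$, I would iterate over all candidate vertices~$v \in V(G)$ that could lie in an optimal solution, setting~$S := \{v\}$ and~$T := \emptyset$. For each choice of~$v$, I invoke the enumeration algorithm of \cref{thm:secluded-F-free-enum} to produce a superset of all $k$-secluded subgraphs that are seclusion-maximal with respect to being connected, $\mathcal{F}$-free, and containing~$v$. By the argument above, every optimal solution is seclusion-maximal and contains at least one vertex~$v$, so it appears in the enumerated collection for that choice of~$S=\{v\}$; equivalently, invoking \cref{cor:secluded-F-free-enum-all} once over the whole graph already enumerates a superset of all relevant candidates.

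Next I would evaluate the candidates. For each enumerated subgraph~$G[C]$, I compute its total weight~$w(C) = \sum_{u \in C} w(u)$ in polynomial time, and return the candidate maximizing this weight. Since the enumeration yields a \emph{superset} of the seclusion-maximal subgraphs, some enumerated sets may fail to be $\mathcal{F}$-free or $k$-secluded; to be safe I would verify feasibility of each candidate---checking~$|N_G(C)| \le k$ in polynomial time and testing $\mathcal{F}$-freeness in time~$n^{||\mathcal{F}||+\Oh(1)}$ by brute-force induced-subgraph search---and discard the infeasible ones before comparing weights. Because the enumeration touches each candidate via the output command and uses only polynomial internal space, I process candidates one at a time as they are emitted, maintaining only the best-so-far solution, thereby preserving the polynomial space bound.

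The running-time and space bounds follow by composition. \cref{cor:secluded-F-free-enum-all} enumerates a superset of size~$2^{\Oh_{\mathcal{F}}(k)} \cdot n$ in time~$2^{\Oh_{\mathcal{F}}(k)} \cdot n^{||\mathcal{F}||+\Oh(1)}$ and polynomial space; the per-candidate feasibility test and weight computation cost~$n^{||\mathcal{F}||+\Oh(1)}$, which is absorbed into the stated bound. I do not expect a genuine obstacle here, as the correctness hinges entirely on the seclusion-maximality reduction enabled by positive weights; the only point requiring mild care is ensuring that the enumeration anchored at singletons covers every connected optimal solution, which is immediate since such a solution contains at least one vertex to serve as~$S$, and that processing the output stream incrementally keeps the space polynomial despite the exponential number of candidates.
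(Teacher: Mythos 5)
Your proposal is correct and follows essentially the same route as the paper: positive weights force any optimum to be seclusion-maximal, so it suffices to run the enumeration of \cref{cor:secluded-F-free-enum-all} (anchored at each singleton~$\{v\}$ with~$T=\emptyset$), filter the streamed candidates for feasibility, and return the heaviest. The only cosmetic difference is that you additionally re-verify $k$-secludedness of each candidate, which the paper omits but which is harmless.
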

{This result follows 
directly from 
\cref{cor:secluded-F-free-enum-all}} since a maximum-weight $k$-secluded $\mathcal{F}$-free subgraph must be seclusion-maximal. Hence it suffices to check for each enumerated subgraph whether it is~$\mathcal{F}$-free, and remember the heaviest one for which this is the case.

{The parameter dependence of our algorithm for \textsc{Connected $k$-secluded $\mathcal{F}$-free subgraph} is optimal under ETH. This follows from an easy reduction from \textsc{Maximum Independent Set}, which cannot be solved in time~$2^{o(n)}$ under ETH~\cite[Thm. 14.6]{CyganFKLMPPS15}. Finding a maximum independent set in an $n$-vertex graph~$G$ is equivalent to finding a maximum-weight triangle-free {connected} induced $(k=n)$-secluded subgraph in the graph~$G'$ that is obtained from~$G$ by inserting a universal vertex of weight~$n$ and setting the weights of all other vertices to~$1$. Consequently, an algorithm with running time~$2^{o(k)} \cdot n^{\Oh(1)}$ for \textsc{Connected $k$-secluded triangle-free induced subgraph} would violate ETH and our parameter dependence is already optimal for~$\mathcal{F} = \{K_3\}$.}

\paragraph*{Application II: Deletion to scattered graph classes} When there are several distinct graph classes (e.g., split graphs and claw-free graphs) on which a problem of interest (e.g. \textsc{Vertex Cover}) becomes tractable, it becomes relevant to compute a minimum vertex set whose removal ensures that each resulting component belongs to one such tractable class. This can lead to fixed-parameter tractable algorithms for solving the original problem on inputs which are \emph{close} to such so-called \emph{islands of tractability}~\cite{GanianRS17b}. The corresponding optimization problem has been coined the deletion problem to \emph{scattered} graph classes~\cite{JacobKMR23,JacobM023}. Jacob, Majumdar, and Raman~\cite{DBLP:conf/iwpec/JacobM020} (later joined by de Kroon for the journal version~\cite{JacobKMR23}) consider the \textsc{$(\Pi_1,\dots,\Pi_d)$-deletion} problem; given hereditary graph classes $\Pi_1,\dots,\Pi_d$, find a set $X\subseteq V(G)$ of at most $k$ vertices such that each connected component of $G-X$ belongs to $\Pi_i$ for some $i \in [d]$. Here $d$ is seen as a constant. When the set of forbidden induced subgraphs $\mathcal{F}_i$ of $\Pi_i$ is finite for each $i \in [d]$, they show~\cite[Lem.~12]{JacobKMR23} that the problem is solvable in time $2^{q(k)+1} \cdot n^{\Oh_\Pi(1)}$, where~$q(k) = 4k^{{10(pd)^2}+4} + 1$. Here $p$ is the maximum number of vertices of any forbidden induced subgraph.

Using Theorem~\ref{thm:secluded-F-free-enum} as a black box, we obtain a single-exponential algorithm for this problem.

\begin{restatable}{theorem}{thmInducedScattered} \label{thm:induced:scattered}
\textsc{$(\Pi_1,\dots,\Pi_d)$-deletion} can be solved in time $2^{\Oh_\Pi(k)}\cdot n^{\Oh_\Pi(1)}$ and polynomial space when each graph class~$\Pi_i$ is characterized by a finite set~$\mathcal{F}_i$ of (not necessarily connected) forbidden induced subgraphs.
\end{restatable}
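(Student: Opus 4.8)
The plan is to reduce the scattered condition to a single finite family of forbidden induced subgraphs and then to destroy obstructions by a branching algorithm whose \emph{separation} branches are powered by \cref{thm:secluded-F-free-enum}. First I would form the union class $\Pi^\cup := \bigcup_{i \in [d]} \Pi_i$. Since each $\Pi_i$ is hereditary, so is $\Pi^\cup$, and a short argument shows that every minimal forbidden induced subgraph of $\Pi^\cup$ has at most $\sum_{i} ||\mathcal{F}_i|| \le d \cdot p$ vertices, where $p := \max_i ||\mathcal{F}_i||$: if $H \notin \Pi^\cup$ then for each $i$ there is an induced $F_i \in \mathcal{F}_i$, and $H[\bigcup_i V(F_i)]$ already lies outside $\Pi^\cup$, so minimality forces $|V(H)| \le dp$. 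Hence $\Pi^\cup$ is characterized by a finite, constant-size family $\mathcal{F}^\cup$ that I can precompute by brute force over all graphs on at most $dp$ vertices. A set $X$ is a solution exactly when every connected component of $G - X$ lies in $\Pi^\cup$, i.e.\ is $\mathcal{F}^\cup$-free. The essential difficulty, and the reason \cref{thm:secluded-F-free-enum} is needed at all, is that members of $\mathcal{F}^\cup$ may be \emph{disconnected}; were they all connected, ordinary bounded-depth branching on the $\le dp$ vertices of a single obstruction would already give an $(dp)^k \cdot n^{\Oh(1)}$ algorithm.

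The algorithm is a recursion on $(G,k)$. If every component of $G$ is $\mathcal{F}^\cup$-free we are done; otherwise I locate in time $n^{\Oh_\Pi(1)}$ a set $W$ with $|W| \le dp$ that induces a forbidden subgraph of $\mathcal{F}^\cup$ and lies within a single component $C_0$ of $G$. A solution $X$ must destroy this obstruction in one of two ways: either $X$ deletes a vertex of $W$, or $X$ keeps all of $W$ but \emph{separates} it, so that $W$ does not lie in a single component of $G-X$. This yields the branching: (a) for each $w \in W$, delete $w$ and recurse on $(G-w, k-1)$; and (b) for each ordered pair $a,b \in W$, invoke \cref{thm:secluded-F-free-enum} with $\mathcal{F} = \mathcal{F}^\cup$, $S=\{a\}$, $T=\{b\}$ to enumerate seclusion-maximal connected $k$-secluded $\mathcal{F}^\cup$-free subgraphs $C$ with $a \in C$ and $b \notin C$, and for each such $C$ commit it as the component of $a$, place $N_G(C)$ into the solution, and recurse on $(G - (C \cup N_G(C)),\, k - |N_G(C)|)$. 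Since $a,b$ lie in the same component $C_0$, any connected $C \ni a$ avoiding $b$ has $\emptyset \ne N_G(C) \subseteq C_0$, so every branch strictly decreases the budget. Completeness of branch (b) rests on an exchange argument: given a solution $X$ keeping all of $W$, choose $a,b \in W$ in distinct components and let $C_a$ be the component of $a$; growing $C_a$ to a seclusion-maximal superset $C_a^*$ (still connected, $\mathcal{F}^\cup$-free, avoiding $b$, with $|N_G(C_a^*)| \le |N_G(C_a)|$) and replacing $X$ by $X^* := (X \sm C_a^*) \cup N_G(C_a^*)$ yields a solution that is no larger and in which $a$'s component is the enumerated, seclusion-maximal $C_a^*$. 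The inequality $|X^*| \le |X|$ follows from $|N_G(C_a^*) \sm X| \le |X \cap C_a^*|$, which I would derive from $|N_G(C_a^*)| \le |N_G(C_a)|$ together with $N_G(C_a) \subseteq X$ and $N_G(C_a) \sm C_a^* \subseteq N_G(C_a^*) \cap X$; validity of $X^*$ follows because every other component of $G - X^*$ is an induced subgraph of a component of $G - X$ and $\Pi^\cup$ is hereditary.

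The most delicate point, and the one I expect to be the main obstacle, is to bound the search tree by $2^{\Oh_\Pi(k)}$ rather than the naive $2^{\Oh_\Pi(k^2)}$ that results from multiplying a $2^{\Oh_\Pi(k)}$ branching factor over depth $k$. The fix is a refined count: the number of committed candidates $C$ with $|N_G(C)| = s$ is only $2^{\Oh_\Pi(s)}$, not $2^{\Oh_\Pi(k)}$. Indeed, a seclusion-maximal $k$-secluded $C$ with $|N_G(C)| = s$ is automatically $s$-secluded, and because no larger admissible superset has a neighborhood of size at most $s$, it is also seclusion-maximal among $s$-secluded sets; hence it appears in the output of \cref{thm:secluded-F-free-enum} run with budget $s$, whose total size is $2^{\Oh_\Pi(s)}$. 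Running the enumeration once for each $s \in \{1,\dots,k\}$ and retaining, for each $s$, only the candidates with $|N_G(C)| = s$, the leaf count obeys $L(k) \le L(k-1) + (dp)^2 \sum_{s=1}^{k} 2^{\Oh_\Pi(s)} \cdot L(k-s)$, which solves to $L(k) = 2^{\Oh_\Pi(k)}$ since $\sum_{s \ge 1} 2^{\Oh_\Pi(s)} D^{-s} < 1$ for a sufficiently large constant base $D = 2^{\Oh_\Pi(1)}$. It is exactly this telescoping, combining the exchange argument with the size-$s$ counting, that turns the quadratic exponent into a single-exponential one. Each node performs $2^{\Oh_\Pi(k)} \cdot n^{||\mathcal{F}^\cup|| + \Oh(1)} = 2^{\Oh_\Pi(k)} \cdot n^{\Oh_\Pi(1)}$ work, and since \cref{thm:secluded-F-free-enum} runs in polynomial space and the tree has depth at most $k$, the whole procedure runs in time $2^{\Oh_\Pi(k)} \cdot n^{\Oh_\Pi(1)}$ and polynomial space, as claimed.
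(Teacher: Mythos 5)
Your proposal is correct, but it reaches the result by a genuinely different branching scheme than the paper. The paper's algorithm is vertex-driven: after a reduction rule that deletes any component already belonging to some $\Pi_i$, it picks an arbitrary vertex $v$ and branches on ``$v$ is deleted'' versus ``the component of $v$ in $G-X$ is one of the seclusion-maximal connected $\mathcal{F}_i$-free $s$-secluded sets containing $v$'', invoking \cref{thm:secluded-F-free-enum} with $S=\{v\}$ and $T=\emptyset$ separately for each $i \in [d]$ and each $s \in [k]$; nonemptiness of the committed neighborhood comes from the reduction rule. You instead first prove that $\Pi^\cup = \bigcup_i \Pi_i$ has a finite obstruction set $\mathcal{F}^\cup$ on at most $dp$ vertices (a correct observation the paper sidesteps entirely by keeping the classes separate), then branch in an obstruction-driven way: delete a vertex of a located obstruction $W$, or separate some ordered pair $a,b \in W$ by calling the enumeration with $S=\{a\}$, $T=\{b\}$, which is what forces $N_G(C) \neq \emptyset$. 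The two proofs share their essential content: the exchange argument replacing the true component by a seclusion-maximal superset with no larger neighborhood (your derivation of $|N_G(C_a^*) \setminus X| \le |X \cap C_a^*|$ matches the paper's), and the decisive observation that candidates with $|N_G(C)|=s$ number only $2^{\Oh_\Pi(s)}$, which is exactly how both recurrences avoid the naive $2^{\Oh(k^2)}$ bound. Your route costs an extra lemma and an $n^{dp+\Oh(1)}$ obstruction search but makes the ``enumeration via separation'' use of a nonempty $T$ explicit; the paper's route is slightly more elementary and only needs the $T=\emptyset$ case (\cref{cor:secluded-F-free-enum-all}). Two small implementation points you should make explicit, as the paper does: \cref{thm:secluded-F-free-enum} returns a \emph{superset} of the seclusion-maximal sets, so each enumerated $C$ must be tested for being $\mathcal{F}^\cup$-free (and connected, containing $a$, avoiding $b$) before being committed, lest an invalid component be accepted; and to keep the space polynomial the enumerated candidates must be processed in a streaming fashion rather than ``retained'', since storing them would use $2^{\Oh_\Pi(k)}$ space.
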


The main idea behind the algorithm is the following. For an arbitrary vertex~$v$, either it belongs to the solution, or we may assume that in the graph that results by removing the solution, the vertex~$v$ belongs to a connected component that forms a seclusion-maximal connected $k$-secluded $\mathcal{F}_i$-free induced subgraph of~$G$ for some~$i \in [d]$. Branching on each of the~$2^{\Oh_\Pi(k)}$ options gives the desired running time by exploiting the fact that in most recursive calls, the parameter decreases by more than a constant (cf.~\cite[Thm.~8.19]{CyganFKLMPPS15}). Prior to our work, single-exponential algorithms were only known for a handful of ad-hoc cases where~$d=2$, such as deleting to a graph in which each component is a tree or a clique~\cite{JacobKMR23}, or when one of the sets of forbidden induced subgraphs~$\mathcal{F}_i$ contains a path.

{Similarly as our first application, the resulting algorithm for \textsc{$(\Pi_1,\dots,\Pi_d)$-deletion} is ETH-tight: the problem is a strict generalization of \textsc{$k$-Vertex Cover}, which is known not to admit an algorithm with running time~$2^{o(k)} \cdot n^{\Oh(1)}$ unless ETH fails.}

\paragraph*{Techniques}
The proof of \cref{thm:secluded-F-free-enum} is based on a bounded-depth search tree algorithm with a nontrivial progress measure.
By adding vertices to~$S$ or~$T$ in branching steps of the enumeration algorithm, the sets grow and the size of a minimum $(S,T)$-separator increases accordingly. The size of a minimum $(S,T)$-separator disjoint from~$S$ is an important progress measure for the algorithm: if it ever exceeds~$k$, there can be no $k$-secluded set containing all of~$S$ and none of~$T$ and therefore the enumeration is finished. 

The branching steps are informed by the farthest minimum~$(S,T)$-separator (see Lemma~\ref{lem:closestfarthestsep}), similarly as the enumeration algorithm for important separators, but are significantly more involved \bmp{because we have to handle the forbidden induced subgraphs.}
A distinctive feature of our algorithm is that the decision made by branching can be to add certain vertices to the set~$T$, while the important-separator enumeration only branches by enriching~$S$. A key step is to use submodularity to infer that a certain vertex set is contained in \emph{all} seclusion-maximal secluded subgraphs under consideration when other branching steps are inapplicable. 

As an illustrative example consider the case $\ff = \{K_3\}$, that is, we want to enumerate seclusion-maximal vertex sets $C \subseteq V(G) \sm T$, $C \supseteq S$, which induce connected triangle-free subgraphs with at most $k$ neighbors.
Let $\lambdastar(S,T)$ denote the size of a minimum vertex set disjoint from $S$ that separates $T$ from $S$---\mic{we will refer to such separators as {\em left-restricted}.
Then $\lambdastar(S,T)$ corresponds to the minimum possible size of $N(C)$.}
Similarly to the enumeration algorithm for important separators, we keep track of two measures: (M1) the value of $k$, and (M2) the gap between $k$ and $\lambdastar(S,T)$.
We combine them into a single progress measure which is bounded by $2k$ and decreases \mic{during branching}.

The first branching scenario occurs when there is some triangle in the graph $G$ which intersects or is adjacent to $S$; 
then we guess which of its vertices should belong to $N(C)$, remove it from the graph, and decrease $k$ by one.
Otherwise, let $\mathcal{U} = \{U_1, \dots U_d\}$ be the collection of all vertex sets of triangles in $G$ (which are now disjoint from $S$).
When there exists a triangle $U_i$ whose addition to $T$ increases the value $\lambdastar(S,T)$, we branch into two possibilities: either $U_i$ is disjoint from $N[C]$---then we set $T \leftarrow T \cup U_i$ so the measure (M2) decreases---or $U_i$ intersects $N(C)$---then we perform branching as above.
\mic{We show that in the remaining case
all the triangles are separated from $S$ by the minimum left-restricted $(S,T)$-separator closest to $S$; hence
the value of $\lambdastar(S,T)$ equals the value of $\lambdastar(S,T \cup V(\mathcal{U}))$.} Next, let $P$ be the farthest minimum \mic{left-restricted} $(S,T \cup V(\mathcal{U}))$-separator; we use submodularity to justify that we can now safely add to $S$ all the vertices reachable from $S$ in $G-P$.
This allows us to assume that when $u \in P$ then either $u \in N(C)$ or $u \in C$, \jjh{which} leads to the last branching strategy.
We either delete $u$ \mic{(so $k$ drops)} or 
add $u$ to $S$; note that in this~case the progress measure may not change directly.
The key observation is that adding $u$ to $S$ invalidates the farthest $(S,T\cup V(\mathcal{U}))$-separator $P$ and now we are promised to make progress in the very next branching step. 
The different branching scenarios are illustrated in~Figure~\ref{fig:ffree}.

\begin{figure}
    \centering
    \includegraphics[width=\linewidth]{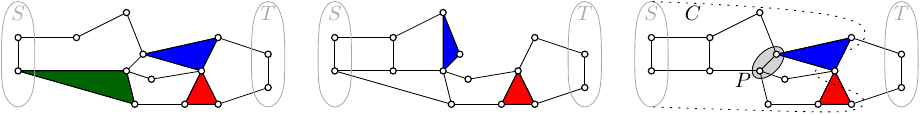}
    \caption{Illustration of the branching steps for enumerating triangle-free $k$-secluded subgraphs for~$k=3$. Left: the green triangle intersects~$S$; we branch to guess which vertex belongs to~$N(C)$. Middle: setting where~$2=\lambdastar(S,T) < \lambdastar(S,T \cup V(\mathcal{U})) = 3$; adding the top triangle to~$T$ increases~$\lambdastar$. The set $\mathcal{U}$ consists of the colored triangles. Right: setting where~$\lambdastar(S,T) = \lambdastar(S,T \cup V(\mathcal{U})) = 2$, with a corresponding farthest separator~$P$. In this case every seclusion-maximal triangle-free set $C \supseteq S$ must be a superset of the reachability set of $S$ in $G-P$.}
    \label{fig:ffree}
\end{figure}

The only property of $K_3$ that we have relied on is connectivity: if a triangle intersects a~triangle-free set~$C$ then it must intersect~$N(C)$ as well.
This is no longer true when $\ff$ contains a~disconnected graph.
For example, the forbidden family for the class of split graphs includes~$2K_2$.
A subgraph of $F \in \ff$ that can be obtained by removing some  components from $F$ is called a \emph{partial forbidden graph}. 
We introduce a third measure to keep track of how many different partial forbidden graphs appear as induced subgraph in $G[S]$.
The main difficulty in 
generalizing the previous approach lies in justification of the greedy argument:  
when $P$ is a farthest minimum separator between $S$ and a certain set then we want to replace $S$ with the set $S'$ of vertices reachable from $S$ in $G-P$.
In the setting of connected obstacles this fact could be proven easily because $S'$ was disjoint from all the obstacles.
The problem is now it may contain some partial forbidden subgraphs.
We handle this issue by defining $P$ in such a way that the sets of partial forbidden graphs appearing in $G[S]$ and $G[S']$ are the same and
giving a \jjh{rearrangement} argument about subgraph isomorphisms. 
This allows us to extend the analysis to any family $\ff$ of forbidden subgraphs.

\paragraph*{Organization} 

The remainder of the paper is organized as follows. 
We provide formal preliminaries in Section~\ref{sec:prelims}. 
The algorithm for enumerating secluded $\mathcal{F}$-free subgraphs is presented in Section~\ref{sec:secludedffree}.
{Then in \cref{sec:applications} we apply it to improve the running times of the two discussed problems.} We conclude in Section~\ref{sec:conclusion}.

\section{Preliminaries} \label{sec:prelims}

\shortv{\paragraph*{Graphs and separators}}
\longv{\subsection{Graphs and separators}}
We consider finite, simple, undirected graphs. We denote the vertex and edge sets of a graph~$G$ by $V(G)$ and $E(G)$ respectively, with $|V(G)| = n$ and $|E(G)| = m$. For a set of vertices $S \subseteq V(G)$, by $G[S]$ we denote the graph induced by $S$. We use shorthand $G-v$ and $G-S$ for $G[V(G) \setminus \{v\}]$ and $G[V(G) \setminus S]$, respectively. The open neighborhood $N_G(v)$ of $v \in V(G)$ is defined as $\{u \in V(G) \mid \{u,v\} \in E(G)\}$. The closed neighborhood of~$v$ is $N_G[v] = N_G(v) \cup \{v\}$. For $S \subseteq V(G)$, we have $N_G[S] = \bigcup_{v \in S} N_G[v]$ and $N_G(S) = N_G[S] \setminus S$.
The set $C$ is called connected if the graph $G[C]$ is connected.

We proceed by introducing notions concerning separators which are crucial for the branching steps of our algorithms. For two sets $S,T \subseteq V(G)$ in a graph~$G$, a set $P \subseteq V(G)$ is an unrestricted $(S,T)$-separator if no connected component of $G-P$ contains a vertex from both $S \setminus P$ and $T \setminus P$. Note that such a separator may intersect $S \cup T$. Equivalently,~$P$ is an~$(S,T)$-separator if each $(S,T)$-path contains a vertex of $P$. 
A restricted~$(S,T)$-separator is an unrestricted $(S,T)$-separator $P$ which satisfies~$P \cap (S \cup T) = \emptyset$.
{A left-restricted~$(S,T)$-separator is an unrestricted $(S,T)$-separator $P$ which satisfies~$P \cap S = \emptyset$.} Let $\lambdastar_G(S,T)$ denote the minimum size of a left-restricted~$(S,T)$-separator, or $+\infty$ if no such separator exists (which happens when~$S \cap T \neq \emptyset$).

\begin{theorem}[Ford-Fulkerson]\label{thm:prelim:ford}
There is an algorithm that, given an $n$-vertex $m$-edge graph $G = (V,E)$, disjoint sets $S,T \subseteq V(G)$, and an integer $k$, runs in time $\Oh(k(n+m))$ and determines whether there exists a restricted $(S,T)$-separator of size at most $k$.
If so, then the algorithm returns a separator of minimum size.
\end{theorem}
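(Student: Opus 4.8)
The plan is to reduce the existence of a small restricted $(S,T)$-separator to a maximum-flow computation in an auxiliary directed network, using the classical vertex-splitting gadget, and then to exploit the integrality of augmenting-path flows together with an early stopping rule to obtain the parameter-sensitive running time. Concretely, I would build a directed network $D$ as follows: every vertex $v \in V \setminus (S \cup T)$ is split into an in-copy $v^-$ and an out-copy $v^+$ joined by an arc $(v^-,v^+)$ of capacity $1$, while vertices of $S \cup T$ are kept as single nodes. Each original edge is modelled by a pair of arcs (from the out-copy of one endpoint to the in-copy of the other, in both directions), and I add a super-source $s$ with arcs to all of $S$ and a super-sink $t$ with arcs from all of $T$. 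All of these non-splitting arcs are given capacity $k+1$; since we only ever compare cut sizes against the threshold $k$, this is enough to guarantee that no $s$--$t$ cut of value at most $k$ uses a non-splitting arc, which in turn forces such a cut to avoid $S \cup T$. By construction the original vertices whose splitting arc crosses a finite $s$--$t$ cut form exactly a restricted $(S,T)$-separator, and conversely, so the minimum restricted $(S,T)$-separator size equals the minimum $s$--$t$ cut value, which by the max-flow/min-cut theorem equals the value of a maximum integral $s$--$t$ flow.

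Next I would compute this flow value with the Ford--Fulkerson augmenting-path method applied to the parameter. Each round performs a single BFS or DFS for an augmenting path in the residual network in time $\Oh(n+m)$ and raises the integral flow by one unit. I cap the number of augmentations at $k+1$: if $k+1$ successful augmentations occur, the minimum cut exceeds $k$ and I report that no restricted separator of size at most $k$ exists; otherwise the search halts with a maximum flow of value at most $k$ after at most $k+1$ path searches. Since each search costs $\Oh(n+m)$, the total running time is $\Oh(k(n+m))$, with no factor depending on $n$ creeping in through the flow value.

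To return an explicit minimum separator when one exists, I would read it off from the residual network of the terminal maximum flow. Let $R$ be the set of nodes of $D$ reachable from $s$ in the residual graph; I output each original vertex $v$ whose splitting arc $(v^-,v^+)$ crosses the induced cut, i.e.\ with $v^- \in R$ and $v^+ \notin R$. The standard min-cut argument shows these are precisely the saturated unit-capacity arcs on the source side of an $s$--$t$ cut of value equal to the max-flow value, so the corresponding vertex set is a restricted $(S,T)$-separator of minimum size.

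The main obstacle here is bookkeeping rather than depth: one must verify carefully that finite $s$--$t$ cuts of value at most $k$ correspond bijectively to restricted $(S,T)$-separators (in particular that arcs of capacity $k+1$, including those incident to $S \cup T$, never appear in such a cut), and that the reachability-based reconstruction recovers a valid vertex set whose size is exactly the flow value. The running-time claim rests on the fact that integral augmentations increase the flow by exactly one and can be truncated at $k+1$, so the flow value---and not the graph size---controls the number of iterations.
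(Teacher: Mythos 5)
Your proof is correct and follows the standard route that the paper itself relies on: it states Theorem~\ref{thm:prelim:ford} as a classical black box, but the vertex-splitting flow network, the early termination after $k+1$ unit augmentations, and the recovery of the cut from residual reachability are exactly the construction the paper sketches in its proof of Lemma~\ref{lem:closestfarthestsep}. The only cosmetic difference is that you assign capacity $k+1$ to non-splitting arcs where the paper uses capacity $2$; both choices ensure that a minimum cut of value at most $k$ consists solely of splitting arcs, so the arguments coincide.
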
 

By the following observation we can translate properties of restricted separators into properties of left-restricted separators.

\begin{observation}\label{obs:prelim:restricted-translate} 
Let $G$ be a graph and  $S,T \subseteq V(G)$.
Consider the graph $G'$ obtained from $G$ by adding a new vertex $t$ adjacent to each $v \in T$.
Then $P \subseteq V(G)$ is a left-restricted $(S,T)$-separator in $G$ if and only if $P$ is a restricted $(S,t)$-separator in $G'$.
\end{observation}

\shortv{\paragraph*{Extremal separators and submodularity}}
\longv{\subsection{Extremal separators and submodularity}}
{The following submodularity property of the cardinality of the open neighborhood is well-known; cf.~\cite[\S 44.12]{Schrijver03} and~\cite[Fn.~3]{KratschW20}.}

\begin{lemma}[Submodularity]\label{lem:submodularity}
Let $G$ be a graph and $A, B \subseteq V(G)$. Then the following holds:
$$|N_G(A)| + |N_G(B)| \ge |N_G(A \cap B)| + |N_G(A \cup B)|.$$
\end{lemma}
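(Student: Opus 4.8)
The plan is to prove the inequality by a per-vertex counting argument, showing that each vertex contributes at least as much to the left-hand side as to the right-hand side. For a vertex $v \in V(G)$ and a set $X \subseteq V(G)$, recall that $v \in N_G(X)$ holds exactly when $v \notin X$ and $v$ has a neighbor in $X$. Both sides of the claimed inequality can therefore be written as sums over all vertices of indicator terms, so it suffices to fix an arbitrary $v \in V(G)$ and verify that
$$[v \in N_G(A)] + [v \in N_G(B)] \;\ge\; [v \in N_G(A \cap B)] + [v \in N_G(A \cup B)],$$
where $[\cdot]$ denotes the Iverson bracket. Summing this over all $v$ then yields the lemma.

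First I would split into cases according to where $v$ lies relative to $A$ and $B$. If $v \in A \cap B$, then $v$ lies inside each of the four sets $A$, $B$, $A \cap B$, and $A \cup B$, so it contributes $0$ to both sides and the inequality holds trivially. If $v \in A \setminus B$, then $v$ contributes nothing to $N_G(A)$ or $N_G(A \cup B)$, so the right-hand side reduces to $[v \in N_G(A \cap B)]$ and the left-hand side to $[v \in N_G(B)]$; since $A \cap B \subseteq B$, any neighbor of $v$ in $A \cap B$ is also a neighbor in $B$, giving $[v \in N_G(A \cap B)] \le [v \in N_G(B)]$ as required. The case $v \in B \setminus A$ is symmetric.

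The only case that needs a little care is $v \notin A \cup B$, where $v$ can in principle contribute to all four neighborhoods. Here the term $[v \in N_G(A \cup B)]$ equals $1$ precisely when $v$ has a neighbor in $A$ or in $B$, while $[v \in N_G(A \cap B)]$ equals $1$ only when $v$ has a neighbor lying in both $A$ and $B$. If $v$ has neighbors in both $A$ and $B$, then the left-hand side equals $2$ and dominates the right-hand side, which is at most $2$; if $v$ has a neighbor in exactly one of $A, B$, then both sides equal $1$ (note that $v$ can then have no neighbor in $A \cap B$); and if $v$ has no neighbor in $A \cup B$, both sides are $0$. In every subcase the inequality holds. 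I do not expect any genuine obstacle: the argument is a routine exhaustive check of the possible configurations of $v$, and the only subtlety is the observation that a neighbor in $A \cap B$ must simultaneously witness a neighbor in $A$ and a neighbor in $B$.
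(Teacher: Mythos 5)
Your proof is correct: the per-vertex charging argument is airtight, the four-way case split on the position of $v$ relative to $A$ and $B$ is exhaustive, and the one subtle point — that in the case $v \notin A \cup B$ a neighbor in $A \cap B$ simultaneously witnesses neighbors in both $A$ and $B$, so the right-hand side can never exceed the left — is handled explicitly. For comparison, the paper does not prove this lemma at all; it is stated as well-known with citations to Schrijver and to Kratsch--Wahlstr\"om. Your argument is the standard elementary proof one would give, and it is a perfectly good way to fill in the omitted details; an equivalent phrasing defines an injective charge from the multiset $N_G(A\cap B)\uplus N_G(A\cup B)$ into $N_G(A)\uplus N_G(B)$, but that is the same computation in different clothing.
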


For a graph~$G$ and vertex sets~$S, P \subseteq V(G)$, we denote by $R_G(S, P)$ the set of vertices which can be reached in~$G-P$ from at least one vertex in the set~$S \setminus P$.


\begin{lemma}
\label{lem:closestfarthestsep}
Let $G$ be a graph and $S,T \subseteq V(G)$ be two disjoint non-adjacent vertex sets. There exist minimum restricted $(S,T)$-separators $P^-$ (closest) and $P^+$ (farthest), such that for each minimum restricted $(S,T)$-separator $P$, it holds that $R_G(S,P^-) \subseteq R_G(S,P) \subseteq R_G(S,P^+)$. Moreover, if a minimum restricted $(S,T)$-separator has size $k$, then $P^-$ and $P^+$ can be identified in $\Oh(k(n+m))$ time.
\end{lemma}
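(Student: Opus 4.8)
The plan is to prove the existence and uniqueness of closest and farthest minimum restricted $(S,T)$-separators by exploiting the submodularity of the open neighborhood function (Lemma~\ref{lem:submodularity}) applied to reachability sets. First I would observe that there is a natural correspondence between minimum restricted $(S,T)$-separators $P$ and certain vertex sets: given such a separator $P$, the reachability set $R_G(S,P)$ is a set $A$ containing $S$, disjoint from $T$, whose open neighborhood $N_G(A)$ is exactly $P$ (since $P$ is a \emph{minimum} restricted separator, every vertex of $P$ must be adjacent to the reachable region, otherwise $P$ would not be inclusion-minimal, and minimum separators are in particular minimal). Conversely, if $A \supseteq S$ is disjoint from $T$ with $N_G(A)$ a restricted separator of minimum size, then $N_G(A)$ is a minimum restricted $(S,T)$-separator. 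Thus the partial order on separators given by inclusion of reachability sets translates into inclusion of the corresponding sets $A$, all of which realize the minimum value $\lambda := \lambdastar_G(S,T)$ as the size of their open neighborhood.

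The key step is the submodularity argument to show that the collection of such extremal sets is closed under union and intersection, which yields unique maximum and minimum elements. Concretely, let $P_1, P_2$ be two minimum restricted $(S,T)$-separators and set $A_1 = R_G(S,P_1)$, $A_2 = R_G(S,P_2)$, so that $|N_G(A_1)| = |N_G(A_2)| = \lambda$. By Lemma~\ref{lem:submodularity} we have $|N_G(A_1 \cap A_2)| + |N_G(A_1 \cup A_2)| \le 2\lambda$. Now $A_1 \cap A_2 \supseteq S$ and $A_1 \cup A_2 \supseteq S$, and both are disjoint from $T$ since $A_1, A_2$ are; moreover their open neighborhoods are restricted $(S,T)$-separators (they are disjoint from $S$, and they separate the respective sets from $T$ because $A_i$ is a union of connected components of $G - N_G(A_i)$). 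Hence each of $|N_G(A_1 \cap A_2)|$ and $|N_G(A_1 \cup A_2)|$ is at least $\lambda$, which forces both to equal $\lambda$. Therefore $N_G(A_1 \cap A_2)$ and $N_G(A_1 \cup A_2)$ are again minimum restricted separators. Defining $P^-$ as the neighborhood of the intersection over all minimum separators, and $P^+$ as the neighborhood of the union, gives the desired closest and farthest separators with $R_G(S,P^-) \subseteq R_G(S,P) \subseteq R_G(S,P^+)$ for every minimum restricted separator $P$.

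For the algorithmic claim, I would argue that $P^-$ and $P^+$ can be extracted from a single maximum-flow computation. Using Theorem~\ref{thm:prelim:ford}, compute a minimum restricted $(S,T)$-separator of size $k$ in time $\Oh(k(n+m))$; this certifies $\lambda = k$ together with a maximum family of internally vertex-disjoint $(S,T)$-paths. The closest separator $P^-$ corresponds to the minimum cut nearest to $S$, obtained by taking the residual graph of the flow and computing the set of vertices reachable from $S$; the neighborhood of this set is $P^-$. Symmetrically, the vertices from which $T$ is reachable in the residual graph give the set whose complement yields $P^+$. Both reachability computations are linear-time graph searches, so the total time remains $\Oh(k(n+m))$.

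The main obstacle I anticipate is handling the bookkeeping in the correspondence between separators and reachability sets with full rigor, in particular verifying that $N_G(R_G(S,P)) = P$ exactly for \emph{minimum} separators and that the sets $A_1 \cap A_2$, $A_1 \cup A_2$ genuinely induce restricted separators as their neighborhoods. The subtlety is that an arbitrary restricted separator $P$ need not equal $N_G(R_G(S,P))$ (there may be ``wasted'' separator vertices not adjacent to the $S$-side), so I must invoke minimality of $P$ to discard such vertices; for minimum separators this is automatic, and it is precisely this reduction that makes the submodularity inequality yield the clean conclusion $|N_G(A_1 \cap A_2)| = |N_G(A_1 \cup A_2)| = \lambda$. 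Once this correspondence is nailed down, the lattice structure and the algorithmic extraction from the residual graph follow in a standard way.
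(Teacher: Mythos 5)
Your proposal is correct and follows essentially the same route as the paper: existence of $P^-$ and $P^+$ via submodularity of $|N_G(\cdot)|$ applied to the reachability sets of minimum restricted separators, and their computation by reachability searches in the residual network after $k$ rounds of Ford--Fulkerson. The paper merely cites the submodularity argument as standard and instead spells out the vertex-splitting flow construction, whereas you elaborate the lattice argument and compress the network details, but the two proofs are the same in substance.
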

{
\begin{proof}
{It is well-known (cf.~\cite[Thm.~8.5]{CyganFKLMPPS15} for the edge-based variant of this statement, or~\cite[\S 3.2]{KratschW20} for the same concept with slightly different terminology) that the existence of these separators follows from submodularity (\cref{lem:submodularity}), 
while they can be computed by analyzing the residual network when applying the Ford-Fulkerson algorithm to compute a minimum separator. We sketch the main ideas for completeness. 

By merging~$S$ into a single vertex~$s^+$ and merging~$T$ into a single vertex~$t^-$, which is harmless because a restricted separator is disjoint from~$S \cup T$, we may assume that~$S$ and~$T$ are singletons. Transform~$G$ into an edge-capacitated directed flow network $D$ in which~$s^+$ is the source and~$t^-$ is the sink. All remaining vertices~$v \in V(G) \setminus (S \cup T)$ are split into two representatives~$v^-, v^+$ connected by an arc~$(v^-,v^+)$ of capacity~$1$. 
For each edge~$uv \in E(G)$ with $u,v \in V(G) \setminus \{s^+,t^-\}$ we add arcs~$(u^+, v^-), (u^-, v^+)$ of capacity~$2$. For edges of the form $s^+v$ we add an arc $(s^+,v^-)$ of capacity 2 to $D$. Similarly, for edges of the form $t^-v$ we add an arc $(v^+,t^-)$ of capacity 2. 
Then the minimum size~$k$ of a restricted~$(S,T)$-separator in~$G$ equals the maximum flow value in the constructed network, which can be computed by~$k$ rounds of the Ford-Fulkerson algorithm. Each round can be implemented to run in time~$\Oh(n+m)$. From the state of the residual network when Ford-Fulkerson terminates we can extract~$P^-$ and~$P^+$ as follows: the set~$P^-$ contains all vertices~$v \in V(G) \setminus (S \cup T)$ for which the source can reach~$v^-$ but not~$v^+$ in the final residual network. Similarly,~$P^+$ contains all vertices~$v \in V(G) \setminus (S \cup T)$ for which~$v^+$ can reach the sink but~$v^-$ cannot.}
\end{proof}
}

By Observation~\ref{obs:prelim:restricted-translate}, we can apply the lemma above for left-restricted separators too; when the sets $S,T$ are disjoint, then $S$ is non-adjacent to $t$ in the graph obtained by adding a vertex $t$ adjacent to every vertex in $T$.

The extremal separators identified in Lemma~\ref{lem:closestfarthestsep} explain when adding a vertex to~$S$ or~$T$ increases the separator size.
{The following statement is not symmetric because we work with the non-symmetric notion of a left-restricted separator.}

\begin{lemma}
\label{lemma:separator:increase}
Let~$G$ be a graph, let~$S,T$ be disjoint vertex sets, and let~$P^-$ and~$P^+$ be the closest and farthest minimum \emph{left-restricted}~$(S,T)$-separators. Then for any vertex~$v \in V(G)$, the following holds:
\begin{enumerate}
    \item \label{increase:type1} $\lambdastar_G(S \cup \{v\}, T) > \lambdastar_G(S,T)$ if and only if~$v \in R_G(T,P^+) \cup P^+$. 
    \item \label{increase:type2} $\lambdastar_G(S, T \cup \{v\}) > \lambdastar_G(S,T)$ if and only if~$v \in R_G(S,P^-)$.
\end{enumerate}
\end{lemma}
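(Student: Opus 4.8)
The plan is to convert the statement about separator sizes into one about minimum open neighborhoods, where submodularity (\cref{lem:submodularity}) applies directly. First I would record the reformulation
\[
\lambdastar_G(S,T) = \min\{\, |N_G(C)| : S \subseteq C \subseteq V(G) \setminus T \,\},
\]
which holds because $N_G(C)$ is a left-restricted $(S,T)$-separator for every such $C$, while conversely $C := R_G(S,P)$ for a minimum left-restricted separator $P$ satisfies $N_G(C) \subseteq P$. Writing $\lambda := \lambdastar_G(S,T)$ and letting $\mathcal{C} := \{C : S \subseteq C \subseteq V(G)\setminus T,\ |N_G(C)| = \lambda\}$ be the family of minimizers, both items become questions about $\mathcal{C}$: since $\lambdastar_G(S\cup\{v\},T) = \min\{|N_G(C)| : S\cup\{v\} \subseteq C \subseteq V(G)\setminus T\}$, item~\ref{increase:type1} is equivalent to $v \notin \bigcup \mathcal{C}$, and item~\ref{increase:type2} (where we additionally forbid $v$ from $C$) is equivalent to $v \in \bigcap \mathcal{C}$. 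Thus the goal reduces to proving $\bigcup \mathcal{C} = V(G) \setminus (R_G(T,P^+) \cup P^+)$ and $\bigcap \mathcal{C} = R_G(S,P^-)$.

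By \cref{lem:submodularity} the family $\mathcal{C}$ is closed under union and intersection: for $C_1, C_2 \in \mathcal{C}$ both $C_1 \cap C_2$ and $C_1 \cup C_2$ still contain $S$, avoid $T$, and hence have neighborhood of size at least $\lambda$, so submodularity forces both neighborhoods to equal $\lambda$. Hence $\mathcal{C}$ has a unique minimum $\bigcap\mathcal{C}$ and maximum $\bigcup \mathcal{C}$. For item~\ref{increase:type2} I would identify $\bigcap \mathcal{C}$ with $R_G(S,P^-)$: for any $C \in \mathcal{C}$, \cref{lem:closestfarthestsep} (transported to left-restricted separators via \cref{obs:prelim:restricted-translate}) gives $R_G(S,P^-) \subseteq R_G(S, N_G(C)) \subseteq C$, so $R_G(S,P^-) \subseteq \bigcap \mathcal{C}$; and $N_G(R_G(S,P^-)) = P^-$ shows $R_G(S,P^-) \in \mathcal{C}$, which yields equality. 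This settles item~\ref{increase:type2}.

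The crux is item~\ref{increase:type1}, i.e.\ that $\bigcup\mathcal{C}$ consists of $R_G(S,P^+)$ together with the vertices reachable from neither $S$ nor $T$ in $G-P^+$. The inclusion $\supseteq$ is routine: $R_G(S,P^+) \in \mathcal{C}$, and if $v$ is reachable from neither side then its component $K$ in $G-P^+$ has $N_G(K) \subseteq P^+$, so $R_G(S,P^+) \cup K \in \mathcal{C}$ contains $v$. The hard direction is $\subseteq$, namely that no minimizer reaches into $R_G(T,P^+) \cup P^+$. The step I would rely on is that for any $C \in \mathcal{C}$ the set $D := C \cup R_G(S,P^+)$ lies in $\mathcal{C}$ and satisfies $R_G(S, N_G(D)) = R_G(S,P^+)$: one inclusion holds because every vertex of $R_G(S,P^+)\subseteq D$ reaches $S$ along a path inside $D$, which avoids the disjoint set $N_G(D)$, and the reverse inclusion is exactly the \emph{farthest}-separator property of $P^+$. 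Since $N_G(R_G(S,P^+)) = P^+$, this forces $P^+ \subseteq N_G(D)$, and equality of sizes gives $N_G(D) = P^+$. Consequently $D$ is a union of connected components of $G-P^+$ that contains $S$ and avoids $T$; therefore it avoids $P^+$ and every $T$-reaching component, so $C \subseteq D \subseteq R_G(S,P^+) \cup \mathrm{Unreach}$, as required.

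I expect this last manoeuvre --- deducing $N_G(D) = P^+$ and thereby pinning $D$ to a union of components of $G-P^+$ --- to be the main obstacle, as it is precisely what excludes the a priori possibility that a minimizer ``wraps around'' a bottleneck into the $T$-side of the farthest separator; the remainder is bookkeeping. The only non-trivial ingredients are neighborhood-submodularity (\cref{lem:submodularity}) and the nesting of the source-reachable sets of extremal separators (\cref{lem:closestfarthestsep}), both of which are available from the preliminaries, and the asymmetry between the two items (the extra term $P^+$ in item~\ref{increase:type1} versus its absence in item~\ref{increase:type2}) emerges naturally from the fact that left-restricted separators are allowed to meet $T$ but not $S$.
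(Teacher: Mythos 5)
Your proof is correct, but it takes a genuinely different route from the paper's for the harder half of the statement. The paper argues item~\eqref{increase:type2} essentially as you do, via the nesting $R_G(S,P^-)\subseteq R_G(S,P)$ from \cref{lem:closestfarthestsep}; for the forward direction of item~\eqref{increase:type1}, however, it runs a minimal-counterexample exchange argument: assuming a left-restricted $(S\cup\{v\},T)$-separator $P$ of size $\lambdastar_G(S,T)$ exists, it builds an explicit $v$--$T$ path $Q$ inside $R_G(T,P^+)\cup\{v\}$, observes that $P$ must hit $Q$ in a vertex $u$ outside $R_G(S,P^+)\cup P^+$, and concludes that $N_G(R_G(S,P))\subseteq P\setminus\{u\}$ is a strictly smaller left-restricted $(S,T)$-separator, a contradiction. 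You instead recast $\lambdastar_G(S,T)$ as $\min\{|N_G(C)| : S\subseteq C\subseteq V(G)\setminus T\}$, use \cref{lem:submodularity} to turn the family $\mathcal{C}$ of minimizers into a lattice, and identify its extremal elements, $\bigcap\mathcal{C}=R_G(S,P^-)$ and $\bigcup\mathcal{C}=V(G)\setminus(R_G(T,P^+)\cup P^+)$; the decisive step, showing that the maximal minimizer $D$ satisfies $N_G(D)=P^+$ exactly and is hence a union of components of $G-P^+$ avoiding $T$, is sound. Both proofs rest on the same two ingredients (submodularity and the nesting property of \cref{lem:closestfarthestsep}), but yours buys a sharper structural statement --- an exact description of which vertices lie in some, respectively every, minimum isolating set, which also absorbs the degenerate cases $v\in T$ and $v\in S$ without special treatment --- at the cost of more setup, whereas the paper's path argument is shorter and more self-contained.
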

{
\begin{proof}
Adding a vertex to $S$ or~$T$ can never decrease the separator size, so for both cases, the left-hand side is either equal to or strictly greater than the right-hand side. 

\subparagraph*{\eqref{increase:type1}.} Observe that if $v \notin R_G(T,P^+) \cup P^+$, then $P^+$ is also a left-restricted~$(S \cup \{v\},T)$-separator which implies $\lambdastar_G(S \cup \{v\}, T) = \lambdastar_G(S,T)$. If $v \in T$, then \eqref{increase:type1} holds as~$\lambdastar_G(S \cup \{v\},T) = +\infty$.
Consider now $v \in (R_G(T,P^+) \cup P^+) \setminus T$; we argue that adding it to~$S$ increases the separator size. Assume for a contradiction that there exists a \jjh{minimum} left-restricted~$(S \cup \{v\},T)$-separator $P$ of size at most $\lambdastar_G(S,T) = \jjh{|P^+|}$.
\jjh{Note that since $P$ is left-restricted, we have $v \notin P$. Observe that $P$ is also a left-restricted $(S,T)$-separator. By Lemma~\ref{lem:closestfarthestsep} we have $R_G(S,P) \subseteq R_G(S,P^+)$. Since $v \in (R_G(T,P^+) \cup P^+) \setminus T$, it follows that $v \notin R_G(S,P)$. We do a case distinction on $v$ to construct a path $Q$ from $v$ to $T$.

\begin{itemize}
    \item In the case that $v \in P^+ \setminus T$, then since $P^+$ is a minimum separator it must be inclusion-minimal. Therefore, since $P^+ \setminus \{v\}$ is not an $(S,T)$-separator, it follows that $v$ has a neighbor in $R_G(T,P^+)$ and so there is a path $Q$ from $v$ to $T$ in the graph induced by $R_G(T,P^+) \cup \{v\}$ such that $V(Q) \cap P^+ = \{v\}$.
    \item In the case that $v \in R_G(T,P^+) \setminus T$, then by definition there is a path from $v$ to $T$ in the graph induced by $R_G(T,P^+)$.
\end{itemize}

Since $P$ is a left-restricted $(S \cup \{v\},T)$-separator and therefore $v \notin P$, it follows that $P$ contains at least one vertex $u \in V(Q)$ that is not in $R_G(S,P^+) \cup P^+$. Let $P'$ be the set of vertices adjacent to $R_G(S,P)$. Since all vertices of $P'$ belong to $P$ while $u \notin P'$, it follows that $P'$ is a left-restricted $(S,T)$-separator that is strictly smaller than $P$, a contradiction to $|P| \leq \lambdastar_G(S,T)$. }

\subparagraph*{\eqref{increase:type2}.} If $v \notin R_G(S, P^-)$, then $P^-$ is a left-restricted~$(S,T\cup \{v\})$-separator as well which implies $\lambdastar_G(S, T\cup \{v\}) = \lambdastar_G(S,T)$.
If $v \in R_G(S, P^-)$, suppose that there exists a minimum left-restricted~$(S, T\cup \{v\})$-separator $P$ of size $|P^-|$.
\jjh{Note that $v \notin S$, as otherwise no such separator exists. Furthermore $P$ is also a left-restricted $(S,T)$-separator. By~\cref{lem:closestfarthestsep} we have $R_G(S,P^-) \subseteq R_G(S,P)$. But since $v \notin R_G(S,P)$ we reach a contradiction as $R_G(S,P) \not \supseteq R_G(S,P^-)$.}
%
\end{proof}
}

The following lemma captures the idea that if~$\lambdastar_G(S,\jjh{T \cup Z}) > \lambdastar_G(S, T)$, then there is a single vertex from~$Z$ whose addition to~$T$ already increases the size of a minimum left-restricted~$(S,T)$-separator. We will use it  to argue that when it is cheaper to separate~$S$ from~$T$ than to separate~$S$ from~$T$ together with all obstacles of a certain form, then there is already a single vertex from one such obstacle which causes this increase.

\begin{lemma}
\label{lem:separatingZUTisoT}
Let $G$ be a graph,~$S \subseteq V(G)$, and~$T,Z \subseteq V(G) \setminus S$. If there is no vertex $v \in Z$ such that $\lambdastar_G(S,T \cup \{v\}) > \lambdastar_G(S,T)$, then $\lambdastar_G(S,T) = \lambdastar_G(S,T \cup Z)$. Furthermore if $\lambdastar_G(S,T) \leq k$, then in $\Oh(k(n+m))$ time we can either find such a vertex $v$ or {determine that} no such vertex exists.
\end{lemma}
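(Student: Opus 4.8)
The plan is to reduce the whole claim to the single-vertex characterization in Lemma~\ref{lemma:separator:increase}\eqref{increase:type2}. First I would let $P^-$ denote the closest minimum left-restricted $(S,T)$-separator, which exists and is finite: since $S \cap T = \emptyset$, the set $N_G(S)$ is already a left-restricted $(S,T)$-separator, and the closest one is obtained from Lemma~\ref{lem:closestfarthestsep} applied to the graph of Observation~\ref{obs:prelim:restricted-translate}. By Lemma~\ref{lemma:separator:increase}\eqref{increase:type2}, a vertex $v$ satisfies $\lambdastar_G(S, T \cup \{v\}) > \lambdastar_G(S,T)$ exactly when $v \in R_G(S, P^-)$. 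Thus the hypothesis that no $v \in Z$ increases $\lambdastar$ translates precisely into $Z \cap R_G(S, P^-) = \emptyset$.

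The key step is then to argue that this single separator $P^-$ already separates $S$ from the whole enlarged set $T \cup Z$. Being left-restricted, $P^-$ is disjoint from $S$, so only the separation property needs checking. Since $P^-$ is an $(S,T)$-separator we have $R_G(S, P^-) \cap T = \emptyset$, and combining this with $Z \cap R_G(S,P^-) = \emptyset$ gives $(T \cup Z) \cap R_G(S,P^-) = \emptyset$; hence no vertex of $(T \cup Z) \setminus P^-$ lies in the same component of $G - P^-$ as a vertex of $S$, which means $P^-$ is a left-restricted $(S, T \cup Z)$-separator. This yields $\lambdastar_G(S, T \cup Z) \le |P^-| = \lambdastar_G(S,T)$. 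The reverse inequality is immediate from monotonicity, as any left-restricted $(S, T \cup Z)$-separator is also one for $(S,T)$ because $T \subseteq T \cup Z$; together these give the desired equality.

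For the algorithmic part, assuming $\lambdastar_G(S,T) \le k$, I would compute $P^-$ through Lemma~\ref{lem:closestfarthestsep} (via Observation~\ref{obs:prelim:restricted-translate}) in time $\Oh(k(n+m))$, then compute $R_G(S, P^-)$ with one graph search in $G - P^-$ in time $\Oh(n+m)$, and finally test whether $Z$ meets $R_G(S, P^-)$. By the equivalence above, any vertex of $Z \cap R_G(S, P^-)$ is a valid witness $v$, while an empty intersection certifies that no such vertex exists, all within $\Oh(k(n+m))$ time.

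I do not expect a serious obstacle, since Lemma~\ref{lemma:separator:increase} does the heavy lifting; the one point that needs care is the step of inferring that $P^-$ separates all of $Z$ at once from the assumption that no \emph{individual} vertex of $Z$ is reachable past $P^-$. It is exactly here that working with the closest minimum separator $P^-$ (rather than an arbitrary one) is essential, since its reachability set $R_G(S,P^-)$ is minimal and captures precisely the separator-increasing vertices.
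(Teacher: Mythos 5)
Your proposal is correct and follows essentially the same route as the paper: both reduce to Lemma~\ref{lemma:separator:increase}\eqref{increase:type2} via the closest minimum left-restricted $(S,T)$-separator $P^-$, observe that the hypothesis is equivalent to $Z \cap R_G(S,P^-) = \emptyset$, and conclude that $P^-$ already separates $S$ from $T \cup Z$, with the algorithmic part being the computation of $P^-$ and one reachability check. Your write-up merely spells out a few details (existence of $P^-$, the reverse inequality by monotonicity) that the paper leaves implicit.
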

{
\begin{proof}
Let $P^-$ be the minimum left-restricted $(S,T)$-separator which is closest to~$S$.
If for every $v \in Z$ the value of $\lambdastar_G(S,T \cup \{v\})$ equals $\lambdastar_G(S,T)$ then \cref{lemma:separator:increase} implies that each $v \in Z$ lies outside $R_G(S,P^-)$ so $Z \cap R_G(S,P^-) = \emptyset$.
Then $P^-$ is a left-restricted $(S,T \cup Z)$-separator of size $\lambdastar_G(S,T)$.

On the other hand, if there is a vertex $v \in Z$ for which $\lambdastar_G(S,T \cup \{v\}) > \lambdastar_G(S,T)$ then $v \in R_G(S,P^-)$.
Hence, in order to detect such a vertex it suffices to compute the closest minimum left-restricted $(S,T)$-separator $P^-$, which can be done in time $\Oh(k(n+m))$ via \cref{lem:closestfarthestsep}.
\end{proof}
}

Finally, the last lemma of this section uses submodularity to argue that the neighborhood size of a vertex set~$C$ with~$S \subseteq C \subseteq V(G) \setminus T$ does not increase when taking its union with the reachable set~$R_G(S,P)$ with respect to a minimum left-restricted $(S,T)$-separator~$P$.

\begin{lemma}
\label{lem:neighborhoodsize:union}
If~$P \subseteq V(G)$ is a minimum left-restricted~$(S,T)$-separator in a graph~$G$ and~$S' = R_G (S,P)$, then for any set~$C$ with~$S\subseteq C \subseteq V(G) \setminus T$ we have~$|N_G(C \cup S')| \leq |N_G(C)|$.
\end{lemma}
{
\begin{proof}
Observe that since~$P$ is a minimum left-restricted~$(S,T)$-separator, we have~$|P| = \lambdastar_G(S,T)$ and~$P = N_G(S')$. We apply the submodular inequality to the sets~$C$ and~$S'$.
$$|N_G(C)|+|N_G(S')| \geq |N_G(C \cup S')| + |N_G(C \cap S')| \geq |N_G(C \cup S')| + \lambdastar_G(S,T).$$
Here the last step comes from the fact that~$S \subseteq S' \subseteq V(G) \setminus T$ since it is the set reachable from~$S$ with respect to a left-restricted~$(S,T)$-separator, so that~$C \cap S'$ contains all of~$S$ and is disjoint from~$T$. This implies that~$N_G (C \cap S')$ is a left-restricted~$(S,T)$-separator, so that~$|N_G(C \cap S')| \geq \lambdastar_G(S,T)$. 

As~$|N_G(S')| = |P| =\lambdastar_G(S,T)$, canceling these terms from both sides \jjh{gives}~$|N_G(C)| \geq |N_G(C \cup S')|$ which completes the proof.
\end{proof}
} 

\section{The enumeration algorithm}
\label{sec:secludedffree}

We need the following concept to deal with forbidden subgraphs which may be disconnected.

\begin{definition}
A partial forbidden graph~$F'$ is a graph obtained from some~$F \in \mathcal{F}$ by deleting zero or more connected components. (So each $F \in \mathcal{F}$ itself is also considered a partial forbidden graph.)
\end{definition}

We use the following notation to work with induced subgraph isomorphisms. An induced subgraph isomorphism from~$H$ to~$G$ is an injection~$\phi \colon V(H) \to V(G)$ such that for all distinct~$u,v \in V(H)$ we have~$\{u,v\} \in E(H)$ if and only if $\{\phi(u), \phi(v)\} \in E(G)$. For a vertex set~$U \subseteq V(H)$ we let~$\phi(U) := \{\phi(u) \mid u \in U\}$. For a subgraph~$H'$ of~$H$ we write~$\phi(H')$ instead of~$\phi(V(H'))$.

The following definition will be important to capture the progress of the recursive algorithm. {See Figure~\ref{fig:squares} for an illustration.}

\begin{definition}
We say that a vertex set~$U \subseteq V(G)$ enriches a vertex set~$S \subseteq V(G)$ with respect to~$\mathcal{F}$ if there exists a partial forbidden graph~$F'$ such that~$G[S \cup U]$ contains an induced subgraph isomorphic to~$F'$ but~$G[S]$ does not. We call such a set $U$ an enrichment.

An enrichment~$U$ is called tight if~$U = \phi(F') \setminus S$ for some induced subgraph isomorphism $\phi \colon V(F') \to V(G)$ from some partial forbidden graph~$F'$ for which~$G[S]$ does not contain an induced subgraph isomorphic to~$F'$.
\end{definition}

The following observation will be used to argue for the correctness of the 
\mic{recursive scheme.
Note that we get an implication only in one way (being seclusion-maximal in $G$ implies being  seclusion-maximal in $G-v$, not the other way around), which is the reason why we output a superset of the sought set in \cref{thm:secluded-F-free-enum}.} 


\begin{observation}\label{obs:seclusion:maximal:deleteneighbor}
Let $G$ be a graph containing disjoint sets~$S,T \subseteq V(G)$ and let $C \subseteq V(G)$ be seclusion-maximal with respect to being connected, $\ff$-free, $k$-secluded and satisfying~$S \subseteq C \subseteq V(G) \setminus T$. For each $v \in N_G(C)$ it holds that $C$ is seclusion-maximal in~$G-v$ with respect to being connected, $\ff$-free, $(k-1)$-secluded and satisfying~$S \subseteq C \subseteq V(G - v) \setminus T$.
\end{observation}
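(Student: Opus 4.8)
The plan is to treat this as a short exchange argument that transfers seclusion-maximality between $G$ and $G-v$ by carefully tracking how the neighborhood cardinality shifts when $v$ is removed. Write $\Pi_G$ for the property ``connected, $\ff$-free, $k$-secluded, and $S \subseteq \cdot \subseteq V(G)\setminus T$'', and $\Pi_{G-v}$ for the analogous property in $G-v$ but with the seclusion bound $k-1$. First I would verify that $C$ itself satisfies $\Pi_{G-v}$. Since $v \in N_G(C)$ we have $v \notin C$, so $G[C] = (G-v)[C]$; hence connectivity and $\ff$-freeness of $C$ are unaffected, and $S \subseteq C \subseteq V(G-v)\setminus T$ still holds (note $v \notin S$ as $S \subseteq C$). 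The one quantitative input is that $N_{G-v}(C) = N_G(C)\setminus\{v\}$ with $v \in N_G(C)$, so $|N_{G-v}(C)| = |N_G(C)| - 1 \le k-1$, giving the required $(k-1)$-seclusion.

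For the maximality part, I would take an arbitrary $C' \supsetneq C$ satisfying $\Pi_{G-v}$ and argue by contradiction, assuming $|N_{G-v}(C')| \le |N_{G-v}(C)|$. The key step is to lift $C'$ into a competitor against $C$ inside $G$, i.e.\ to show $C'$ satisfies $\Pi_G$. Because $C' \subseteq V(G-v)$ we have $v \notin C'$, so $(G-v)[C'] = G[C']$; this immediately yields connectivity, $\ff$-freeness, and $S \subseteq C' \subseteq V(G)\setminus T$. For the seclusion bound, reinserting $v$ can add at most one vertex to a neighborhood, so $N_G(C') \subseteq N_{G-v}(C') \cup \{v\}$ and hence $|N_G(C')| \le |N_{G-v}(C')| + 1 \le (k-1)+1 = k$. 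Thus $C'$ satisfies $\Pi_G$, and since $C' \supsetneq C$, the seclusion-maximality of $C$ in $G$ forces $|N_G(C')| > |N_G(C)|$.

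It then remains to combine the cardinality relations. From $|N_G(C)| = |N_{G-v}(C)| + 1$, the bound $|N_G(C')| \le |N_{G-v}(C')| + 1$, and the contradiction hypothesis $|N_{G-v}(C')| \le |N_{G-v}(C)|$, I would derive $|N_G(C')| \le |N_{G-v}(C)| + 1 = |N_G(C)|$, contradicting $|N_G(C')| > |N_G(C)|$. This rules out the assumption and shows $|N_{G-v}(C')| > |N_{G-v}(C)|$ for every $C' \supsetneq C$ satisfying $\Pi_{G-v}$, which is exactly the seclusion-maximality of $C$ in $G-v$.

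I do not expect a genuine obstacle; the statement is elementary. The only point requiring care is the asymmetry in the two neighborhood shifts: removing $v$ decreases $|N_G(C)|$ by \emph{exactly} one, because $v \in N_G(C)$ by hypothesis, whereas for the strictly larger set $C'$ reinserting $v$ increases the neighborhood by \emph{at most} one, since $v$ need not be adjacent to $C'$. Keeping these two directions straight is what makes the final inequality chain line up, and it is precisely the source of the one-directional nature of the implication flagged in the surrounding discussion.
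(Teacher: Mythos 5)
Your proof is correct. The paper states this as an observation without providing a proof (the surrounding text only remarks that the implication is one-directional, which is why the algorithm outputs a superset), and your exchange argument — checking that $C$ itself satisfies the shifted property in $G-v$, then lifting any violating superset $C'$ back to $G$ using $|N_{G-v}(C)| = |N_G(C)|-1$ versus $|N_G(C')| \le |N_{G-v}(C')|+1$ — is exactly the intended justification.
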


With these ingredients, we present the enumeration algorithm.
Recall that~$||\mathcal{F}|| = \max_{F \in \mathcal{F}} |V(F)|$ denotes the maximum order of any graph in~$\mathcal{F}$.

\begin{figure}
    \centering
    \includegraphics[width=.9\linewidth]{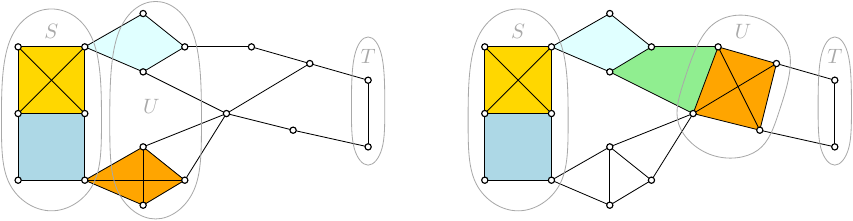}
    \caption{Illustration of the idea of enrichment and the branching steps in the proof of Theorem~\ref{thm:secluded-F-free-enum}. Here $F = C_4 \uplus K_4$.
    Left: The graph $G[S]$ contains $C_4$ and $K_4$, but not $F$. The set~$U$ enriches~$S$ since $G[S \cup U]$ contains a new partial forbidden graph $F$. Every component of $G[U]$ is adjacent to $S$, so Step~\ref{stepf:local} applies.
    Right: The two top copies of $C_4$ do not enrich $S$. One of them intersects the only copy of $K_4$ in $G[S]$; the other one is adjacent to the only copy of~$K_4$, while~$F$ has to appear as an induced subgraph. However the connected set $U$ enriches $S$ and it gets detected in Step~\ref{stepf:u}.
    In~both cases the enrichments are tight.
    }
    \label{fig:squares}
\end{figure}

\thmSecludedFFreeEnum*

\begin{proof}
Algorithm $\mathsf{Enum}_{\mathcal{F}}(G,S,T,\jjh{k})$ solves {the} enumeration task as follows.
\begin{enumerate}
    \item \label{stepf:stop} Stop the algorithm if one of the following holds:
    \begin{enumerate}
        \item\label{stepf:large_flow} $\lambdastar_G(S,T) > k$,
        \item the vertices of~$S$ are not contained in a single connected component of~$G$, or
        \item the graph~$G[S]$ contains an induced subgraph isomorphic to some~$F \in \mathcal{F}$.
    \end{enumerate}
    \emph{There are no secluded subgraphs satisfying all imposed conditions.}
    \item\label{stepf:output} If the connected component~$C$ of~$G$ which contains~$S$ is $\mathcal{F}$-free and includes no vertex of~$T$: output~$C$ and stop.
    
    \emph{Component~$C$ is the unique seclusion-maximal one satisfying the imposed conditions.}
    \item \label{stepf:local} If there is a vertex set~$U \subseteq V(G) \setminus (S \cup T)$ 
    such that:
        \begin{itemize}
            \item each connected component of~$G[U]$ is adjacent to a vertex of~$S$, and
            \item the set~$U$ is a tight enrichment of~$S$ with respect to~$\mathcal{F}$ (so~$G[S \cup U]$ contains a new partial forbidden graph)
        \end{itemize}
        
    	then execute the following calls and stop:
    	\begin{enumerate}
    	    \item\label{stepf:local:branch} For each~$u \in U$ call $\mathsf{Enum}_{\mathcal{F}}(G-u,S,T,k-1)$.
    	    \item\label{stepf:local:all} Call $\mathsf{Enum}_{\mathcal{F}}(G,S \cup U,T, k)$.
    	\end{enumerate}
    	\emph{A tight enrichment can have at most $||\ff||$ vertices which bounds the branching factor in Step~\ref{stepf:local:branch}.
    	Note that these are exhaustive even though we do not consider adding~$U$ to~$T$: since each component of~$G[U]$ is adjacent to a vertex of~$S$, if a relevant secluded subgraph does not contain all of~$U$ then it contains some vertex of~$U$ in its neighborhood and we find it in Step~\ref{stepf:local:branch}.}
	\item \label{stepf:u} For the rest of the algorithm, let~$\mathcal{U}$ denote the collection of all connected vertex sets~$U \subseteq V(G) \setminus (S \cup T)$ 
	which form tight enrichments of $S$ with respect to~$\mathcal{F}$. Let~$V(\mathcal{U}) := \bigcup_{U \in \mathcal{U}} U$. 
	\begin{enumerate}
	    \item\label{stepf:u:a} If~$\lambdastar_G(S,T)<\lambdastar_G(S,T \cup V(\mathcal{U}))$: then (using Lemma~\ref{lem:separatingZUTisoT}) there exists~$U\in \mathcal{U}$ such that $\lambdastar_G(S,T \cup U) > \lambdastar_G(S,T)$, execute the following calls and stop:
	    \begin{enumerate}
	        \item\label{stepf:u:a:i} For each~$u \in U$ call $\mathsf{Enum}_{\mathcal{F}}(G-u,S,T, k-1)$. (The value of $k$ decreases.)
	        \item\label{stepf:u:a:ii} Call $\mathsf{Enum}_{\mathcal{F}}(G,S \cup U,T, k)$. (We absorb a new partial forbidden graph.)
			\item\label{stepf:u:a:iii} Call $\mathsf{Enum}_{\mathcal{F}}(G,S,T \cup U, k)$. (The separator size increases.)
	    \end{enumerate}
		\item \label{stepf:push} If~$\lambdastar_G (S,T)= \lambdastar_G(S,T \cup V(\mathcal{U}))$, then let~$P$ be the farthest left-restricted minimum~$(S,T \cup V(\mathcal{U}))$-separator in~$G$, and let~$S' = R_G(S,P) \supseteq S$. Pick an arbitrary~$p \in P$ (which may be contained in~$T$ but not in~$S$).
		\begin{enumerate}
		    \item \label{stepf:push:neighbor} Call $\mathsf{Enum}_{\mathcal{F}}(G-p,S',T \setminus \{p\}, k-1$). (The value of $k$ decreases.)
		    \item \label{stepf:delayprogress} If~$p \notin T$, then call $\mathsf{Enum}_{\mathcal{F}}(G,S' \cup \{p\},T, k)$. \\ (Either here or in the next iteration we will be able to make progress.)		    
		\end{enumerate}
    \end{enumerate}

    \emph{It might happen that $\mathcal{U}$ is empty; in this case the algorithm will execute Step \ref{stepf:push}.
    Also note that $P$ is non-empty because the algorithm did not stop in Step~\ref{stepf:output}; hence it is always possible to choose a~vertex~$p \in P$.
    }

\end{enumerate}

Before providing an in-depth analysis of the algorithm, we establish that it always terminates. For each recursive call, either a vertex outside $S$ is deleted, or one of $S$ or $T$ grows in size while the two remain disjoint. Since $S$ and $T$ are vertex subsets of a finite graph, this process terminates.
The key argument in the correctness of the algorithm is formalized in the following claim.

\begin{claim}\label{claim:secluded:push:correct}
If the algorithm reaches Step~\ref{stepf:push}, then every seclusion-maximal $k$-secluded subgraph satisfying the conditions of the theorem statement contains~$S'$.
\end{claim}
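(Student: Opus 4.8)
The plan is to prove the claim by a single exchange argument against the definition of seclusion-maximality (\cref{def:seclusion:maximal}). I would fix an arbitrary set $C$ that is seclusion-maximal with respect to being connected, $\mathcal F$-free, $k$-secluded, and satisfying $S \subseteq C \subseteq V(G)\setminus T$, and suppose towards a contradiction that $S' \not\subseteq C$. The idea is to consider the strictly larger set $C' := C \cup S'$ and show that it \emph{again} satisfies all four properties while $|N_G(C')| \le |N_G(C)| \le k$. Since $C' \supsetneq C$, this contradicts seclusion-maximality (which demands a strictly larger neighborhood for every proper superset with the property), so in fact $S' \subseteq C$.

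The three ``cheap'' properties follow from the reachability structure of $S' = R_G(S,P)$. Because $P$ is an $(S, T \cup V(\mathcal U))$-separator, the set reachable from $S$ in $G-P$ is disjoint from both $P$ and $(T \cup V(\mathcal U))\setminus P$, hence disjoint from $T \cup V(\mathcal U)$ entirely; together with $S \subseteq S'$ this gives $S \subseteq C' \subseteq V(G)\setminus T$. Connectivity of $C'$ holds because every vertex of $S'$ is joined to $S$ by a path contained in $S'$, while $C$ is connected with $S \subseteq C$. For the neighborhood bound I would \emph{not} quote \cref{lem:neighborhoodsize:union} verbatim (it assumes the larger set is disjoint from the separated set, whereas $C$ may meet $V(\mathcal U)$), but instead apply submodularity (\cref{lem:submodularity}) directly to $C$ and $S'$. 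Since $P$ is a \emph{minimum} left-restricted $(S, T\cup V(\mathcal U))$-separator we have $|N_G(S')| = |P| = \lambdastar_G(S, T\cup V(\mathcal U))$, while $C \cap S'$ contains $S$ and lies inside $S'$ and is therefore disjoint from $T \cup V(\mathcal U)$, making $N_G(C\cap S')$ a left-restricted $(S, T\cup V(\mathcal U))$-separator of size at least $|P|$. Cancelling these two equal terms from the submodular inequality yields $|N_G(C')| \le |N_G(C)|$.

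The crux is showing that $G[C']$ stays $\mathcal F$-free, and here the hypotheses guaranteeing that we reached Step~\ref{stepf:push} are essential. As a warm-up, if a connected $F \in \mathcal F$ had an induced copy $W = \phi(F)$ in $G[C']$, then $W \not\subseteq S$ since $G[S]$ is $\mathcal F$-free (Step~\ref{stepf:stop}); were $W$ to meet $S$, then $\phi(F)\setminus S$ would be a tight enrichment each of whose components attaches to $S$ inside the connected set $W$, triggering the \emph{inapplicable} Step~\ref{stepf:local}, so $W$ is disjoint from and non-adjacent to $S$, i.e.\ $W \in \mathcal U$; hence $W \subseteq V(\mathcal U)$ is disjoint from $S'$, forcing $W \subseteq C$ and contradicting $\mathcal F$-freeness of $C$. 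For disconnected forbidden graphs I would first prove the auxiliary statement $(\star)$ that $G[S]$ and $G[S']$ contain exactly the same partial forbidden graphs as induced subgraphs; defining $P$ as a separator from $T \cup V(\mathcal U)$ (rather than merely from $T$) is precisely what enables this. One inclusion is immediate from $S \subseteq S'$. For the other I argue on a minimal counterexample: if a partial forbidden graph $F''$ has an induced copy $\psi$ in $G[S']$ but not in $G[S]$, then $\psi(F'')\setminus S$ is a tight enrichment, so by the (inapplicable) all-components-adjacent condition of Step~\ref{stepf:local} some component of $G[\psi(F'')\setminus S]$ is non-adjacent to $S$; such a component must arise from a whole component $D_i$ of $F''$ with $\psi(D_i)$ disjoint from $S$, so $\psi(D_i)$ is a connected tight enrichment inside $S'$ --- contradicting $S' \cap V(\mathcal U) = \emptyset$ --- unless $G[S]$ already contains $D_i$, in which case minimality lets me peel $D_i$ off and recurse.

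Given $(\star)$, the final step transports an induced copy of $F \in \mathcal F$ from $G[C']$ back into $G[C]$: components of $F$ whose images lie in $C$ are retained, while those whose images stray into $S' \setminus C$ are re-embedded, using $(\star)$, into copies guaranteed to exist inside $S \subseteq C$. The hard part --- the technical heart of the whole proof --- is the \emph{rearrangement} of these induced subgraph isomorphisms for disconnected $F$: I must ensure the re-embedded components remain pairwise disjoint from, and non-adjacent to, the retained ones, so that the patched map is still an \emph{induced} embedding into $C$. This is exactly the difficulty flagged in the overview, and it is controlled by the same three ingredients used above: enrichments being non-adjacent to $S$ (Step~\ref{stepf:local} inapplicable), the set $S'$ being sealed off from the rest of $C$ by $P = N_G(S')$, and $S'$ avoiding all connected tight enrichments $V(\mathcal U)$. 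Once a copy of $F$ is produced inside $G[C]$ we contradict the $\mathcal F$-freeness of $C$, which completes the proof that $G[C']$ is $\mathcal F$-free and hence the claim.
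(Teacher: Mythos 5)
Your overall strategy is the paper's: fix any $C$ satisfying the four conditions, show $C\cup S'$ still satisfies them with $|N_G(C\cup S')|\le |N_G(C)|$, and conclude from seclusion-maximality. The exchange setup, the neighborhood bound (your direct submodularity computation is fine; note the paper sidesteps your worry by observing that $\lambdastar_G(S,T)=\lambdastar_G(S,T\cup V(\mathcal{U}))$ makes $P$ a minimum left-restricted $(S,T)$-separator, so \cref{lem:neighborhoodsize:union} applies with $T$ itself, and $C\subseteq V(G)\setminus T$ does hold), and the warm-up for connected $F$ are all essentially correct.

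The gap is in the disconnected case, at exactly the point you flag as ``the technical heart'' and then do not carry out. Both your proof of $(\star)$ and your final transport step require patching an embedding of some components into $S$ together with the retained images of the remaining components, and you never establish that the patched map is an \emph{induced} embedding. As described, it need not be: in the transport step you retain every component whose image lies in $C$, but such an image may intersect $S$ or $N_G(S)$ and hence collide with, or be adjacent to, the fresh copy you place inside $S$; likewise, in $(\star)$, knowing that $G[S]$ contains $D_i$ and (by recursion) contains $F''-D_i$ does not yield a copy of $F''=D_i\uplus(F''-D_i)$ in $G[S]$, because the two copies may touch. Your transport step also cannot handle a component of $F$ whose image meets both $C\setminus S'$ and $S'\setminus C$: it neither ``lies in $C$'' nor lies in $S'$, so $(\star)$ is not applicable to it. The paper closes this with a single direct argument on $G[C\cup S']$: among all induced embeddings $\phi$ of $F$ into $G[C\cup S']$ it fixes one minimizing $|\phi(F)\setminus S|$, and then re-embeds not the components straying into $S'\setminus C$ but either (i) \emph{all} components whose images meet $N_G[S]$ (when $\phi(F)\cap N_G(S)\ne\emptyset$; their union fails to be a tight enrichment because Step~\ref{stepf:local} was inapplicable, hence already occurs in $G[S]$), or (ii) one component containing a vertex of $S'\setminus N_G[S]$ together with all components mapped inside $S$ (not a tight enrichment because its image outside $S$ is connected, disjoint from $S\cup T$, and meets $S'$, hence cannot lie in $V(\mathcal{U})$). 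In both cases every retained component has image disjoint from $N_G[S]$, which is precisely what makes the patched map induced, and the replacement strictly decreases $|\phi(F)\setminus S|$, contradicting minimality. Some version of this ``minimize the number of vertices used outside $S$ and replace everything near $S$ at once'' device is needed to make your rearrangement go through.
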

\begin{claimproof}
We prove the claim by showing that for an arbitrary $k$-secluded $\mathcal{F}$-free connected induced subgraph~$G[C]$ satisfying~$S \subseteq C \subseteq V(G) \setminus T$, the subgraph induced by~$C \cup S'$ also satisfies these properties while~$|N_G(C \cup S')| \leq |N_G(C)|$. Hence any seclusion-maximal subgraph satisfying the conditions contains~$S'$.

Under the conditions of Step~\ref{stepf:push}, we have~$\lambdastar_G (S,T)= \lambdastar_G(S,T \cup V(\mathcal{U}))$, so that the set~$P$ is a left-restricted minimum~$(S,T)$-separator. 
\mic{Next, we have $S' = R_G(S,P)$.
By exploiting submodularity of the size of the open neighborhood, we prove in \cref{lem:neighborhoodsize:union} 
that $|N_G(C \cup S')| \leq |N_G(C)|$.} The key part of the argument is to prove that~$C \cup S'$ induces an $\mathcal{F}$-free subgraph. Assume for a contradiction that~$G[C \cup S']$ contains an induced subgraph isomorphic to~$F \in \mathcal{F}$ and let~$\phi \colon V(F) \to C \cup S'$ denote an induced subgraph isomorphism. Out of all ways to choose~$\phi$, fix a choice that minimizes the number of vertices~$|\phi(F) \setminus S|$ the subgraph uses from outside~$S$. We distinguish two cases.

\paragraph*{{Neighborhood of $S$ intersects $\phi(F)$}} If~$\phi(F) \cap N_G(S) \neq \emptyset$, then we will use the assumption that Step~\ref{stepf:local} of the algorithm was not applicable to derive a contradiction. Let~$F'$ be the graph consisting of those connected components~$F_i$ of~$F$ for which~$\phi(F_i) \cap N_G[S] \neq \emptyset$; 
let~$U = \phi(F') \sm S$.
Observe that each connected component of~$G[U]$ is adjacent to a vertex of~$S$. 
    By construction~$U$ is disjoint from~$S$, \jjh{and}~$U$ is disjoint from~$T$ since~$\phi(F) \subseteq C \cup S'$ while both these sets are disjoint from~$T$.
    Hence~$U$ satisfies all but one of the conditions for applying Step~\ref{stepf:local}. Since the algorithm reached Step~\ref{stepf:push}, it follows that~$U$ failed the last criterion which means that 
    the partial forbidden graph~$F'$ 
    also exists as an induced subgraph in~$G[S]$. Let~$\phi_{F'} \colon V(F') \to S$ be an induced subgraph isomorphism from~$F'$ to~$G[S]$. Since all vertices~$v \in V(F)$ for which~$\phi(v) \in N_G[S]$ satisfy~{$v \in V(F')$}, we can define a new subgraph isomorphism~$\phi'$ of~$F$ in~$G[C \cup S']$ as follows for each~$v \in V(F)$:
    \begin{equation} \label{eq:phiprime}
        \phi'(v) = \begin{cases}
        \phi_{F'}(v) & \mbox{if~$v \in F'$} \\
        \phi(v) & \mbox{otherwise.}
        \end{cases}
    \end{equation}
    Observe that this is a valid induced subgraph isomorphism since~$F'$ consists of some connected components of~$F$, and we effectively replace the model of~$F'$ by~$\phi_{F'}$. Since the model of the remaining graph~$\overline{F'} = F - F'$ does not use any vertex of~$N_G[S]$ by definition of~$F'$, there are no edges between vertices of~$\phi_{F'}(F')$ and vertices of~$\phi(\overline{F'})$, which validates the induced subgraph isomorphism.
    
    Since~$\phi(F)$ contains at least one vertex from~$N_G(S)$ while~$\phi'(F)$ does not, and the only vertices of~$\phi'(F) \setminus \phi(F)$ belong to~$S$, we conclude that~$\phi'(F)$ contains strictly fewer vertices outside~$S$ than~$\phi(F)$; a contradiction to minimality of~$\phi$.

    \paragraph*{{Neighborhood of $S$ does not intersect $\phi(F)$}} Now suppose that~$\phi(F) \cap N_G(S) = \emptyset$. If~$\phi(F) \subseteq C$, then~$\phi(F)$ is an induced $F$-subgraph in~$G[C]$, a contradiction to the assumption that~$C$ is $\mathcal{F}$-free. Hence~$\phi(F)$ must contain a vertex~$v \in S' \setminus C \subseteq S' \setminus S$. Since the previous case was not applicable,~$v \notin N_G(S)$ and therefore~$v \in S' \setminus N_G[S]$.

    Fix an arbitrary connected component~$F_i$ of $F$ for which $\phi(F_i)$ contains a vertex of~$S' \setminus N_G[S]$. We derive several properties of~$\phi(F_i)$.
    \begin{enumerate}
        \item Since~$F_i$ is a connected component of~$F$, the graph~$G[\phi(F_i)]$ is connected.
        \item We claim that $\phi(F_i) \cap S = \emptyset$. Note that a connected subgraph cannot both contain a vertex from~$S$ and a vertex outside~$N_G[S]$ without intersecting~$N_G(S)$. Since~$\phi(F) \cap N_G(S) = \emptyset$ by the case distinction, the graph~$G[\phi(F_i)]$ is connected since~$F_i$ is connected, and~$\phi(F_i)$ contains a vertex of~$S' \setminus N_G[S]$, we find $\phi(F_i) \cap S = \emptyset$.
        \item $\phi(F_i) \cap T = \emptyset$, since~$\phi(F) \subseteq C \cup S'$ while both~$C$ and~$S'$ are disjoint from~$T$.
        \item We claim that~$\phi(F_i) \notin \mathcal{U}$. To see that, recall that~$S' = R_G(S,P)$ is the set of vertices reachable from~$S$ when removing the~$(S,T \cup V(\mathcal{U}))$-separator~$P$. The definition of separator therefore ensures that no vertex of~$S'$ belongs to~$V(\mathcal{U})$. Since~$\phi(F_i)$ contains a vertex of~$S' \setminus N_G[S]$ by construction, some vertex of~$\phi(F_i)$ does not belong to~$V(\mathcal{U})$ and therefore~$\phi(F_i) \notin \mathcal{U}$.
    \end{enumerate}
    Now note that~$\phi(F_i)$ satisfies almost all requirements for being contained in the set~$\mathcal{U}$ defined in Step~\ref{stepf:u}: 
    it induces a connected subgraph and it is disjoint from~$S \cup T$. From the fact that~$\phi(F_i) \notin \mathcal{U}$ we therefore conclude that it fails the last criterion: the set~$\phi(F_i)$ is not a tight enrichment of $S$.
    
    Let~$F'$ be the graph formed by~$F_i$ together with all components~$F_j$ of~$F$ for which~$\phi(F_j) \subseteq S$; then $\phi(F_i) = \phi(F') \sm S$.
    Since~$\phi(F_i)$ is not a tight enrichment of $S$, the partial forbidden graph~$F'$ 
    is also contained in~$G[S]$. Let~$\phi_{F'} \colon F' \to S$ denote an induced subgraph isomorphism of~$F'$ to~$G[S]$. Since~$\phi(F)$ contains no vertex of~$N_G(S)$, we can define a new subgraph isomorphism~$\phi'$ of~$F$ in~$G[C \cup S']$ exactly as in~\eqref{eq:phiprime}. 
    
    Since the graph~$F'$ consists of some connected components of~$F$, while~$\phi_{F'}(F') \subseteq S$ and~$\phi(\overline{F'}) \cap N_G[S] = \emptyset$, it follows that~$\phi'$ is an induced subgraph isomorphism of~$F$ in~$G[C \cup S']$. 
    But~$|\phi'(F) \setminus S|$ is strictly smaller than~$|\phi(F) \setminus S|$ since~$\phi(F_i)$ intersects~$S' \setminus N_G[S]$ while~$\phi'(F_i) \subseteq \phi'(F') \subseteq S$ \jjh{and}~$\phi$ and~$\phi'$ coincide on~$\overline{F'}$.
    This contradicts the minimality of the choice of~$\phi$.

Since the case distinction is exhaustive, this proves the claim.
\end{claimproof}

Using the previous claim, we can establish the correctness of the algorithm.
\begin{claim} \label{claim:secluded:correct}
If~$G[C]$ is an induced subgraph of~$G$ that is seclusion-maximal with respect to being connected, $\mathcal{F}$-free, $k$-secluded and satisfying~$S \subseteq C \subseteq V(G) \setminus T$, then~$C$ occurs in the output of $\mathsf{Enum}_{\mathcal{F}}(G,S,T,k)$.
\end{claim}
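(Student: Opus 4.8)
The plan is to prove Claim~\ref{claim:secluded:correct} by induction on the termination measure used above (the number of vertices that can still be deleted or added to $S \cup T$), showing that if $G[C]$ is seclusion-maximal with respect to being connected, $\mathcal{F}$-free, $k$-secluded, and satisfying $S \subseteq C \subseteq V(G) \setminus T$, then at least one recursive call preserves this property for $C$ and eventually outputs it. First I would verify the base cases: the algorithm never stops prematurely in Step~\ref{stepf:stop} on a valid input, since $\lambdastar_G(S,T) \leq |N_G(C)| \leq k$ by the existence of $C$, the set $S \subseteq C$ lies in one component, and $G[S]$ is $\mathcal{F}$-free as an induced subgraph of the $\mathcal{F}$-free graph $G[C]$. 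If Step~\ref{stepf:output} fires, the component of $G$ containing $S$ is $\mathcal{F}$-free and $T$-free, hence it is the unique seclusion-maximal candidate and equals $C$, so it is correctly output.

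Next I would handle the branching steps by checking that $C$ is \emph{covered} by one of the recursive calls, meaning $C$ remains seclusion-maximal for the modified instance $(G', S', T', k')$ with the correct parameter drop, so that the inductive hypothesis applies. In Steps~\ref{stepf:local} and~\ref{stepf:u:a}, we branch on a tight enrichment $U$ (with each component of $G[U]$ adjacent to $S$ in Step~\ref{stepf:local}, or $U$ increasing $\lambdastar$ in Step~\ref{stepf:u:a}). The key exhaustiveness observation, already flagged in the algorithm's inline comment, is that either some $u \in U$ lies in $N_G(C)$---then the call $\mathsf{Enum}_{\mathcal{F}}(G-u,S,T,k-1)$ covers $C$ via \cref{obs:seclusion:maximal:deleteneighbor}---or $U \subseteq C$, in which case $\mathsf{Enum}_{\mathcal{F}}(G,S \cup U,T,k)$ covers $C$ since $S \cup U \subseteq C$. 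In Step~\ref{stepf:u:a} there is the additional possibility that $U$ is disjoint from $N_G[C]$, handled by the call adding $U$ to $T$; I must argue these three cases are genuinely exhaustive, which follows because for a connected $U$ adjacent to $S$ (so adjacent to $C$), $U$ cannot be disjoint from $N_G[C]$ unless it lies inside $C$.

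For Step~\ref{stepf:push}, I would invoke Claim~\ref{claim:secluded:push:correct} to conclude $S' \subseteq C$, so that replacing $S$ by $S'$ is safe. Then I examine the chosen $p \in P$: since $P = N_G(S')$ and $S' \subseteq C$, either $p \in C$ or $p \in N_G(C)$. In the former case, $S' \cup \{p\} \subseteq C$ and the call in Step~\ref{stepf:delayprogress} covers $C$ (this requires $p \notin T$, which holds because $p \in C$ while $C \cap T = \emptyset$). In the latter case $p \in N_G(C)$, and the call $\mathsf{Enum}_{\mathcal{F}}(G-p, S', T \setminus \{p\}, k-1)$ covers $C$ by \cref{obs:seclusion:maximal:deleteneighbor} applied with $S'$ in place of $S$. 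Finally I would assemble these case analyses into the induction, applying the hypothesis to the covering call to conclude $C$ appears in its output, hence in the output of the original call.

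I expect the main obstacle to be the exhaustiveness and seclusion-maximality bookkeeping rather than any single deep inequality: I must confirm in every branch that $C$ is still \emph{seclusion-maximal} (not merely a valid candidate) for the recursive instance, since the inductive hypothesis demands seclusion-maximality as input. For the vertex-deletion branches this is exactly \cref{obs:seclusion:maximal:deleteneighbor}, but for the branches that enlarge $S$ to $S \cup U$ or $S' \cup \{p\}$ I need to argue that seclusion-maximality is inherited when we merely tighten the lower-bound constraint $S$ while keeping $G$, $T$, and $k$ fixed---which holds because the family of competitors $C' \supsetneq C$ only shrinks when $S$ grows, so $C$ remains seclusion-maximal. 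Getting this inheritance precisely right across all sub-branches, together with the parameter drop $k \to k-1$ in the deletion cases to anchor the single-exponential bound, is the delicate part.
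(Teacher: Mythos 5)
Your overall strategy is the same as the paper's: induction over the recursion, showing for each branching step that at least one recursive call ``covers'' $C$ with appropriately adjusted parameters, using Observation~\ref{obs:seclusion:maximal:deleteneighbor} for the vertex-deletion branches and Claim~\ref{claim:secluded:push:correct} to obtain $S' \subseteq C$ in Step~\ref{stepf:push}. Your remark that seclusion-maximality is inherited when $S$ grows (the family of competitors only shrinks) is correct and is used implicitly by the paper as well.

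There is one concrete step you skip that the paper devotes a substantial paragraph to: in Step~\ref{stepf:push} you ``examine the chosen $p \in P$'', but you never establish that $P$ is non-empty. If $P = \emptyset$ the algorithm makes no recursive call in Step~\ref{stepf:push} and $C$ is never output, so this case must be ruled out. The paper does so by arguing that since Step~\ref{stepf:output} did not fire, the component of $S$ either meets $T$ (whence $\lambdastar_G(S,T) > 0$) or contains a forbidden induced subgraph, and that a minimum-size tight enrichment is then connected and non-adjacent to $S$ (otherwise Step~\ref{stepf:local} would have applied), hence belongs to $\mathcal{U}$, giving $\lambdastar_G(S, T \cup V(\mathcal{U})) > 0$. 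Within your framework a shorter argument is available: if $P = \emptyset$ then $S'$ is the entire component containing $S$, Claim~\ref{claim:secluded:push:correct} forces $S' \subseteq C$, so $C$ equals that component, which is then $\mathcal{F}$-free and disjoint from $T$, contradicting that Step~\ref{stepf:output} did not fire. Either way, the step needs to be written down. A smaller quibble: your justification for exhaustiveness in Step~\ref{stepf:u:a} (``a connected $U$ adjacent to $S$ cannot be disjoint from $N_G[C]$'') is really the reason Step~\ref{stepf:local} needs only two branches; the sets in $\mathcal{U}$ reaching Step~\ref{stepf:u} are not adjacent to $S$, and exhaustiveness of the three branches there follows simply from connectivity of $U$: if $U$ misses $N_G(C)$ it lies entirely inside $C$ or entirely outside $N_G[C]$.
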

\begin{claimproof}
We prove this claim by induction on the recursion depth of the $\mathsf{Enum}_\ff$ algorithm, which is valid as we argued above it is finite. In the base case, the algorithm does not recurse. In other words, the algorithm either stopped in Step~\ref{stepf:stop} or~\ref{stepf:output}. If the algorithm stops in Step~\ref{stepf:stop}, then there can be no induced subgraph satisfying the conditions and so there is nothing to show. If the algorithm stops in Step~\ref{stepf:output}, then the only seclusion-maximal induced subgraph is the $\ff$-free connected component containing $S$. Note that this component is $k$-secluded since $k \geq 0$ as $\lambdastar_G(S,T) \geq 0$ and the algorithm did not stop in Step~\ref{stepf:large_flow}.

For the induction step, we may assume that each recursive call made by the algorithm correctly enumerates a superset of the seclusion-maximal subgraphs satisfying the conditions imposed by the parameters of the recursive call, as the recursion depth of the execution of those calls is strictly smaller than the recursion depth for the current arguments~$(G,S,T,k)$. 
Consider a connected $\ff$-free $k$-secluded induced subgraph $G[C]$ of $G$ with $S \subseteq C \subseteq V(G) \setminus T$  that is seclusion-maximal with respect to satisfying all these conditions.
Suppose there is a vertex set $U \subseteq V(G) \setminus (S \cup T)$ 
that satisfies the conditions of Step~\ref{stepf:local}. If $U \subseteq C$, then by induction $C$ is part of the enumerated output of Step~\ref{stepf:local:all}. Otherwise, since each connected component of $G[U]$ is adjacent to a vertex in $S$, there is at least one vertex $u \in U$ such that $u \in N_G(C)$. By Observation~\ref{obs:seclusion:maximal:deleteneighbor}, the output of the corresponding call in Step~\ref{stepf:local:branch} contains $C$. Note that since $U \cap T = \emptyset$, we have $T \subseteq V(G) \setminus (S \cup U)$ and therefore the recursive calls satisfy the input requirements.

Next we consider the correctness in case such a set $U$ does not exist so the algorithm reaches Step~\ref{stepf:u}.
Let $\mathcal{U}$ be the set of tight enrichments as defined in Step~\ref{stepf:u}. First suppose that $\lambdastar_G(S,T)<\lambdastar_G(S,T \cup V(\mathcal{U}))$. Then by the contrapositive of the first part of Lemma~\ref{lem:separatingZUTisoT} with $Z = V(\mathcal{U})$, there is a vertex $v \in V(\mathcal{U}) \setminus T$ such that $\lambdastar_G(S,T \cup \{v\}) > \lambdastar_G(S,T)$. By picking an enrichment $U \in \mathcal{U}$ such that $v \in U$, this implies $\lambdastar_G(S,T \cup U) > \lambdastar_G(S,T)$. Now if there is a vertex $u \in U$ such that $u \in N_G(C)$, then by induction and Observation~\ref{obs:seclusion:maximal:deleteneighbor} we get that $C$ is output by the corresponding call in Step~\ref{stepf:u:a:i}. Otherwise, either $U \subseteq C$ or $U \cap C = \emptyset$ (since $U$ is connected) and $C$ is found in Step~\ref{stepf:u:a:ii} or Step~\ref{stepf:u:a:iii} respectively. Again observe that these recursive calls satisfy the input requirements as $U \cap (S \cup T) = \emptyset$.

Finally suppose that $\lambdastar_G(S,T) = \lambdastar_G(S,T \cup V(\mathcal{U}))$. By Claim~\ref{claim:secluded:push:correct} we get that $S' \subseteq C$. We first argue that $P = N_G(S')$ is non-empty. Note that since the algorithm did not stop in Step~\ref{stepf:stop}, the graph $G[S]$ is $\ff$-free and $S$ is contained in a single connected component of $G$. Furthermore since it did not stop in Step~\ref{stepf:output}, the connected component containing $S$ either has a vertex of $T$ or is not $\ff$-free. Note that the former case already implies $\lambdastar_G(S,T) > 0$. If the component has no vertex of $T$ and is not $\ff$-free, then it contains a vertex set~$J$ for which~$G[J]$ is isomorphic to some $F \in \ff$. Observe that $J \setminus (S \cup T) = J \setminus S$ is a tight enrichment of $S$. 
We have established that it is possible to enrich $S$ but we need an enrichment that meets the conditions of Step~\ref{stepf:u}.
Let $U \subseteq V(G) \sm (S \cup T)$ be a tight enrichment of minimum size and let $\phi \colon V(F') \to V(G)$ be the corresponding subgraph isomorphism from some partial forbidden graph $F'$; we have $U = \phi(F') \sm S$.
We argue that $G[U]$ is connected.
If each connected component of $G[U]$ is adjacent to a vertex of $S$, then Step~\ref{stepf:local} would have applied, contradicting the fact that the algorithm reaches Step~\ref{stepf:u}.
Hence, there exists a connected component of $G[U]$ that is non-adjacent to $S$; let~$U'$ be the vertex set of such a component.
Since $U$ is chosen to be minimum, we get that $U \setminus U'$ is not a tight enrichment, and so there is an induced subgraph of $G[S]$ isomorphic to the partial forbidden graph {$F'' = G[\phi(F') \sm U']$}.
This subgraph of~$G[S]$ combines with the graph~$G[U']$ to form an induced subgraph isomorphic to~$F'$ (we exploit that~$U'$ is not adjacent to~$S$),  which shows that~$U'$ is a tight enrichment.
By minimality of $U$ we obtain $U = U'$. Hence~$U$ is not adjacent to~$S$ and the graph $G[U]$ is connected so $U \in \mathcal{U}$.
Since $U$ and $S$ are contained in the same connected component we get that $\lambdastar_G(S,T \cup V(\mathcal{U})) > 0$. This implies there exists some vertex $p \in P = N_G(S')$. Since $S' \subseteq C$, we either get $p \in N_G(C)$, or (if $p \notin T$) $p \in C$. By induction (and Observation~\ref{obs:seclusion:maximal:deleteneighbor}) we conclude that $C$ is part of the output of Step~\ref{stepf:push:neighbor} or Step~\ref{stepf:delayprogress}. The condition $p \notin T$ ensures that the input requirements of the latter recursive call are satisfied.
\end{claimproof}

As the previous claim shows that the algorithm enumerates a superset of the relevant seclusion-maximal induced subgraphs, to prove \cref{thm:secluded-F-free-enum} it suffices to bound the size of the search tree generated by the algorithm, and thereby the running time and total number of induced subgraphs which are given as output. To that end, we argue that for any two successive recursive calls in the recursion tree, at least one of them makes strict progress on a relevant measure. Since no call can increase the measure, this will imply a bound on the depth of the recursion tree. Since it is easy to see that the branching factor is a constant depending on~$||\mathcal{F}||$, this will lead to the desired bound.

\begin{restatable}{claim}{claimSearchTree}
\label{claim:secluded:searchtree}
The search tree generated by the call~$\mathsf{Enum}_{\mathcal{F}}(G,S,T,k)$ has depth~$\Oh_{\ff}(k)$ and $2^{\Oh_{\ff}(k)}$~leaves.
\end{restatable}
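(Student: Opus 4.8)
The plan is to define a progress measure that provably cannot increase along any root-to-leaf path and that strictly decreases on at least one of every two consecutive recursive calls. Following the intuition sketched in the ``Techniques'' section, I would track three quantities: (M1) the parameter $k$; (M2) the gap $k - \lambdastar_G(S,T)$, which is nonnegative whenever the algorithm has not stopped in Step~\ref{stepf:large_flow}; and (M3) the number of distinct partial forbidden graphs that appear as induced subgraphs in $G[S]$. Since every $F \in \mathcal{F}$ has at most $||\mathcal{F}||$ vertices, the number of partial forbidden graphs is bounded by a constant $c_{\mathcal{F}}$ depending only on $\mathcal{F}$, so (M3) ranges over $\{0, \ldots, c_{\mathcal{F}}\}$. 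I would combine these into a single integer potential, for instance $\mu = \alpha \cdot k + (k - \lambdastar_G(S,T)) + \beta \cdot (c_{\mathcal{F}} - |\{\text{partial forbidden graphs in } G[S]\}|)$ for suitable constants $\alpha, \beta = \Oh_{\mathcal{F}}(1)$; the goal is to show $0 \le \mu = \Oh_{\mathcal{F}}(k)$ throughout and that $\mu$ drops by at least $1$ within any two consecutive levels of recursion.

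The core of the proof is a case analysis over which step generates each recursive call, checking that no call increases $\mu$ and locating where the decrease happens. In Steps~\ref{stepf:local:branch}, \ref{stepf:u:a:i}, and~\ref{stepf:push:neighbor} the value of $k$ decreases by one while $\lambdastar$ cannot increase by more than one (deleting a vertex changes the minimum separator size by at most one), so both (M1) and (M2) are controlled and $\mu$ strictly drops. In Steps~\ref{stepf:local:all} and~\ref{stepf:u:a:ii} we add the enrichment $U$ to $S$; here the key point is that because $U$ is a \emph{tight} enrichment, $G[S \cup U]$ contains a partial forbidden graph absent from $G[S]$, so (M3) strictly increases and $\mu$ drops, while $\lambdastar_G(S,T)$ does not decrease (enlarging $S$ cannot shrink a left-restricted separator) so (M2) does not increase. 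In Step~\ref{stepf:u:a:iii} we add $U$ to $T$ in the branch where $\lambdastar_G(S, T \cup U) > \lambdastar_G(S,T)$, so $\lambdastar$ strictly increases, (M2) strictly decreases, and $\mu$ drops. The delicate case is Step~\ref{stepf:delayprogress}, where we add $p$ to $S' \cup \{p\}$ without changing $k$ and potentially without changing $\lambdastar$ or the partial-forbidden-graph count, so $\mu$ need not decrease at this single step.

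The main obstacle is therefore handling Step~\ref{stepf:delayprogress}, and the plan is to argue that \emph{two} consecutive recursive calls together make progress: the call in Step~\ref{stepf:delayprogress} replaces $S$ by $S' \cup \{p\}$, and since $P$ is the \emph{farthest} minimum left-restricted $(S, T \cup V(\mathcal{U}))$-separator with $p \in P = N_G(S')$, the vertex $p$ lies in $R_G(S', P^-) \cup \ldots$; more precisely, by Lemma~\ref{lemma:separator:increase}\eqref{increase:type2} (applied after recomputing the relevant separator) adding $p$ to the reachability set $S'$ strictly increases $\lambdastar$ relative to the newly relevant target set, so that in the immediately following recursive call the algorithm is forced into a step that drops $\mu$. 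I would formalize this as: whenever a call is generated by Step~\ref{stepf:delayprogress}, every child of that call decreases $\mu$, hence $\mu$ decreases over any two consecutive edges of the recursion tree. Combining this with the observation that the branching factor of each step is at most $||\mathcal{F}|| + 2 = \Oh_{\mathcal{F}}(1)$ (a tight enrichment has at most $||\mathcal{F}||$ vertices, plus the two or three additional branches), a recursion of depth $\Oh_{\mathcal{F}}(k)$ with constant branching factor yields $2^{\Oh_{\mathcal{F}}(k)}$ leaves. Finally I would verify the initial value $\mu = \Oh_{\mathcal{F}}(k)$ and its nonnegativity (using that the algorithm stops once $\lambdastar_G(S,T) > k$), completing the bound.
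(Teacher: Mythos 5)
Your overall strategy coincides with the paper's: the same potential (the paper uses $k + (k - \lambdastar_G(S,T)) + (g(V(G)) - g(S))$ where $g(X)$ counts partial forbidden graphs in $G[X]$), the same easy cases, and the same ``progress within two consecutive calls'' framing with Step~\ref{stepf:delayprogress} as the only delicate case. However, your treatment of that delicate case has a genuine gap. First, a technical slip: to argue that adding $p$ to $S$ invalidates the separator you should invoke Lemma~\ref{lemma:separator:increase}\eqref{increase:type1} (adding a vertex of $P^+ \cup R_G(T,P^+)$ to $S$), not part~\eqref{increase:type2}, using that $p$ lies on the farthest separator $P$. More seriously, what that lemma gives you is $\lambdastar_G(S' \cup \{p\}, T \cup V(\mathcal{U})) > \lambdastar_G(S, T \cup V(\mathcal{U}))$, whereas the potential is defined in terms of $\lambdastar_G(\cdot, T)$. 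Bridging this mismatch is the actual content of the case, and your assertion that ``the algorithm is forced into a step that drops $\mu$'' skips it. In particular, when $p \in V(\mathcal{U})$ the collection of tight enrichments changes in the child call, so the farthest-separator argument with respect to $T \cup V(\mathcal{U})$ does not carry over at all; the paper resolves this case by a completely different mechanism, observing that if $p$ belongs to some $U \in \mathcal{U}$ with $|U| \geq 2$ then $U \setminus \{p\}$ becomes a tight enrichment of $S' \cup \{p\}$ all of whose components are adjacent to the new $S$, so Step~\ref{stepf:local} fires in the child and makes progress there (and if $\{p\}$ itself is a tight enrichment, $g(S)$ already increases in the current call).

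Even in the remaining case $p \notin V(\mathcal{U})$ a further split is needed: if $\lambdastar_G(S'\cup\{p\},T) = \lambdastar_G(S'\cup\{p\}, T\cup V(\mathcal{U}))$ then the increase with respect to $T \cup V(\mathcal{U})$ does transfer to $\lambdastar_G(\cdot,T)$ and the potential drops immediately; otherwise the two values differ, $\mathcal{U}$ is unchanged in the child (precisely because $p \notin V(\mathcal{U})$), and therefore Step~\ref{stepf:u:a} applies in the child, whose branches all decrease the potential. Without this three-way case distinction your claim that ``every child of a Step~\ref{stepf:delayprogress} call decreases $\mu$'' is unsubstantiated, and it is in fact not the right statement in the subcase where the drop already occurs at the Step~\ref{stepf:delayprogress} call itself. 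The rest of your argument (the easy branching cases, the $\Oh_{\ff}(1)$ branching factor, the $\Oh_{\ff}(k)$ bound on the initial potential and hence on the depth) matches the paper and is fine.
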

\begin{claimproof}
Let $g(X)$ denote the number of partial forbidden graphs of $\ff$ contained in $G[X]$; note that~$g(X) \leq \sum_{F \in \ff} 2^{|V(F)|}$. For the running time analysis, we consider the progress measure $k + (k - \lambdastar_G(S,T)) + (g(V(G)) - g(S))$. We argue that the measure drops by at least one after two consecutive recursive calls to the algorithm. For most cases, the measure already drops in the first recursive call. First suppose that a recursive call is made in Step~\ref{stepf:local:branch}, then the third summand does not increase: $S$ does not change while~$g(V(G) \setminus \{u\}) \leq g(V(G))$. We have $\lambdastar_{G-u}(S,T) \geq \lambdastar_G(S,T) - 1$. Since $k$ \jjh{is} decreased by one, the measure strictly goes down.   
Next suppose a recursive call is made in Step~\ref{stepf:local:all}. Since $g(S \cup U) > g(S)$ by construction, and $\lambdastar_G(S \cup U,T) \geq \lambdastar_G(S, T)$, again the measure strictly goes down.
The fact that the measure drops for a recursive call in Step~\ref{stepf:u:a:i} follows akin to the arguments for Step~\ref{stepf:local:branch}. The same holds for Step~\ref{stepf:u:a:ii} akin to Step~\ref{stepf:local:all}. For a recursive call made in Step~\ref{stepf:u:a:iii}, we know by assumption that $\lambdastar_G(S, T \cup U) > \lambdastar_G(S,T)$. Since $k$ and $S$ remain the same, the measure strictly decreases.

The reasoning becomes more involved for a recursive call in Step~\ref{stepf:push}. For a recursive call in Step~\ref{stepf:push:neighbor}, we have~$g(S') \geq g(S)$ as $S \subseteq S'$, while $\lambdastar_{G-p}(S',T \setminus \{p\}) = \lambdastar_G(S,T) - 1$ since~$p$ belongs to a minimum left-restricted $(S,T)$-separator in~$G$, which is also a left-restricted minimum~$(S', T)$-separator. Since~$k$ goes down by one, the measure strictly decreases.

Finally, consider a recursive call made in Step~\ref{stepf:delayprogress} (so $p \notin T$). Note that $g(S' \cup \{p\}) \geq g(S)$ as $S \subseteq S'$, $k$ remains the same, and $\lambdastar_G(S' \cup \{p\},T) \geq \lambdastar_G(S,T)$. We distinguish three cases, depending on whether~$p$ is in some enrichment. 
\begin{itemize}
    \item If $\{p\} \in \mathcal{U}$, then actually $g(S' \cup \{p\}) > g(S)$ and the measure strictly drops. 
    \item If $\{p\}$ is not a tight enrichment of $S$, but $p \in U$ for some $U \in \mathcal{U}$, observe that $U \setminus \{p\}$ is disjoint from $S' \cup \{p\} \cup T$, 
    forms a tight enrichment of
    $S' \cup \{p\}$, and each connected component of $G[U \setminus \{p\}]$ is adjacent to $p \in S' \cup \{p\}$ as $G[U]$ is connected. It follows that in the next call Step~\ref{stepf:local} applies (which it reaches as we assumed the algorithm recurses twice) and again we make progress. 
    \item In the remainder we have $p \notin V(\mathcal{U})$. First consider the case that $\lambdastar_G(S' \cup \{p\},T) = \lambdastar_G(S' \cup \{p\}, T \cup V(\mathcal{U}))$. Then since $P = N_G(S')$ was a farthest $(S,T \cup V(\mathcal{U}))$-separator, by Lemma~\ref{lemma:separator:increase} we get that $\lambdastar_G(S' \cup \{p\},T) = \lambdastar_G(S' \cup \{p\}, T \cup V(\mathcal{U})) > \lambdastar_G(S, T \cup V(\mathcal{U})) = \lambdastar_G(S,T)$, and therefore the progress measure strictly drops. 
    
    In the remaining case we have $\lambdastar_G(S' \cup \{p\},T) < \lambdastar_G(S' \cup \{p\}, T \cup V(\mathcal{U}))$. Since $p \notin V(\mathcal{U})$, if the algorithm reaches Step~\ref{stepf:u} in the next iteration, the set of enrichments $\mathcal{U}$ remains the same. But then Step~\ref{stepf:u:a} applies, which makes progress in the measure as argued above. 
\end{itemize}
We have shown that the measure decreases by at least one after two consecutive recursive calls. 
The algorithm cannot proceed once the measure becomes negative because $g(S)$ cannot grow beyond $g(V(G))$ and whenever $k < 0$ or $\lambdastar_G(S,T) > k$ the algorithm immediately stops.
Since $g(V(G))$ is upper-bounded by a constant depending on $||\ff||$ and $|\ff|$, {we infer that the search tree has depth $\Oh_{\ff}(k)$.
Any tight enrichment detected in Step \ref{stepf:local} or Step \ref{stepf:u} can have at most $||\ff||$ vertices, so}
the branching factor is bounded by $||\ff||$.
Hence, the search tree has $2^{\Oh_{\ff}(k)}$ leaves as required. 
\end{claimproof}

 The previous claim implies that the number of seclusion-maximal connected $\ff$-free $k$-secluded induced subgraphs containing all of $S$ and none of $T$ is~$2^{\Oh_\ff(k)}$, since the algorithm outputs at most one subgraph per call and only does so in leaf nodes of the recursion tree. {As Claim~\ref{claim:secluded:searchtree} bounds the size of the search tree generated by the algorithm, the desired bound on the total running time follows from the claim below.}

\begin{restatable}{claim}{claimRunningTime}
A single iteration of~$\mathsf{Enum}_{\mathcal{F}}(G,S,T,k)$ can be implemented to run in time~$|\mathcal{F}| \cdot 2^{||\ff||} \cdot n^{||\mathcal{F}|| + \Oh(1)}$ and polynomial space.
\end{restatable}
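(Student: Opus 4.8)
The plan is to account separately for the two kinds of work performed in a single iteration---minimum-separator computations and forbidden-subgraph detection---and to show that the latter dominates and matches the claimed bound. First I would dispatch every operation that reduces to computing a left-restricted minimum separator. Deciding whether $\lambdastar_G(S,T) > k$ in Step~\ref{stepf:large_flow}, computing $\lambdastar_G(S, T \cup V(\mathcal{U}))$ in Step~\ref{stepf:u}, locating the enriching vertex guaranteed by Lemma~\ref{lem:separatingZUTisoT} in Step~\ref{stepf:u:a}, and extracting the farthest separator $P$ together with $S' = R_G(S,P)$ in Step~\ref{stepf:push} all reduce, via Observation~\ref{obs:prelim:restricted-translate}, to restricted-separator computations in the graph obtained by adding a single sink vertex. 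By Theorem~\ref{thm:prelim:ford} and Lemma~\ref{lem:closestfarthestsep} each of these runs in $\Oh(k(n+m))$ time and polynomial space; the reachability set $S'$, the choice of $p \in P$, and the connectivity test for $S$ are standard graph searches. All of these contributions are dominated by the bound we are aiming for.

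Next I would bound the cost of the tasks involving forbidden subgraphs: testing $\mathcal{F}$-freeness of $G[S]$ in Step~\ref{stepf:stop} and of the component containing $S$ in Step~\ref{stepf:output}, and detecting and collecting tight enrichments in Steps~\ref{stepf:local}--\ref{stepf:u}. The key quantitative observations are that the number of partial forbidden graphs is at most $\sum_{F \in \mathcal{F}} 2^{|V(F)|} \leq |\mathcal{F}| \cdot 2^{||\mathcal{F}||}$, since each $F$ yields at most $2^{|V(F)|} \leq 2^{||\mathcal{F}||}$ subgraphs by deleting components, and that any partial forbidden graph $F'$ satisfies $|V(F')| \leq ||\mathcal{F}||$, so each candidate induced subgraph isomorphism $\phi \colon V(F') \to V(G)$ is determined by an image set of at most $||\mathcal{F}||$ vertices. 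Hence there are at most $n^{||\mathcal{F}||}$ candidate images to inspect per partial forbidden graph, and for each candidate one can verify in time polynomial in $||\mathcal{F}||$ whether it induces a copy of $F'$. Deciding whether $G[S]$ contains some $F \in \mathcal{F}$, and more generally deciding for a fixed partial forbidden graph whether it appears in $G[S]$, therefore costs $|\mathcal{F}| \cdot 2^{||\mathcal{F}||} \cdot n^{||\mathcal{F}|| + \Oh(1)}$ in total.

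With these primitives, I would implement the enrichment detection of Step~\ref{stepf:local} by iterating over every partial forbidden graph $F'$ whose copy is absent from $G[S]$ and every candidate image of $F'$, setting $U = \phi(F') \setminus S$, and checking in polynomial time that $U$ is disjoint from $S \cup T$ and that every connected component of $G[U]$ has a neighbor in $S$; the first such $U$ found triggers the branching. The set $V(\mathcal{U})$ needed in Step~\ref{stepf:u} is assembled analogously by iterating over all connected tight enrichments and accumulating their vertices, handling one candidate image at a time so that only polynomial space is ever used. Since the separator computations run in $\Oh(k(n+m))$ time and the subgraph-enumeration work is $|\mathcal{F}| \cdot 2^{||\mathcal{F}||} \cdot n^{||\mathcal{F}|| + \Oh(1)}$, the latter dominates and yields the claimed per-iteration running time and polynomial space bound. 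The point requiring the most care is that the enrichment searches, although a priori they range over the exponentially many sets in $\mathcal{U}$, can be driven entirely by the per-partial-graph enumeration of candidate images without ever materializing $\mathcal{U}$ explicitly; this is exactly what keeps the time within $n^{||\mathcal{F}|| + \Oh(1)}$ and the space polynomial.
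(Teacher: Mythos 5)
Your proposal is correct and follows essentially the same route as the paper: separator-based subroutines are dispatched in $\Oh(k(n+m))$ time via Theorem~\ref{thm:prelim:ford}, Observation~\ref{obs:prelim:restricted-translate} and Lemma~\ref{lem:closestfarthestsep}, while the dominant cost comes from brute-forcing the at most $|\mathcal{F}|\cdot 2^{||\mathcal{F}||}$ partial forbidden graphs over their at most $n^{||\mathcal{F}||}$ candidate images to test presence in $G[S]$ and to detect tight enrichments for Steps~\ref{stepf:local} and~\ref{stepf:u}. The only cosmetic difference is that the paper materializes $\mathcal{U}$ explicitly (which is still of polynomial size for fixed $\mathcal{F}$), whereas you stream over candidate images; both satisfy the polynomial-space requirement.
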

\begin{claimproof}
{Within this proof, for a graph $F$ we abbreviate $|V(F)|$ to $|F|$.}
Deciding whether $\lambdastar_G(S,T) > \jjh{k}$, as required in Step~\ref{stepf:stop}, can be done in $\Oh(k(n+m))$ time by Theorem~\ref{thm:prelim:ford} and Observation~\ref{obs:prelim:restricted-translate}. Finding the connected components of $G$, and deciding if $S$ is contained in only one can be done in $\Oh(n+m)$ time. 
Deciding if $G[S]$ contains an induced subgraph isomorphic to some $F \in \ff$ can be done in $|\mathcal{F}| \cdot n^{||\ff|| + \Oh(1)}$ time. In the same running time we can decide if the connected component containing $S$ is $\ff$-free and contains nothing of $T$ as needed for Step~\ref{stepf:output}. 

For Step~\ref{stepf:local}, we proceed as follows.
For each $F \in \ff$, for each partial forbidden graph $F'$ of $F$ (which consists of some subset of the connected components of $F$), verify whether there is an induced subgraph of $G[S]$ isomorphic to $F'$ by checking all of the at most $n^{|F'|}$ ways in which it could appear and verifying in $\Oh(n^2)$ time if the right adjacencies are there. Keep track of which partial forbidden graphs are not present in $G[S]$. 
Next, for each partial forbidden graph $F'$ not appearing in $G[S]$, for each of the at most $n^{|F'|}$ induced subgraph isomorphisms {$\phi: V(F') \to V(G) \sm T$ 
we verify whether} each connected component of $U = \phi(F') \sm S$ is adjacent to a vertex of $S$. 
This brings the total time for Step~\ref{stepf:local} to $|\ff|\cdot 2^{||\ff||} \cdot n^{||\ff|| + \Oh(1)}$. 

In the same time we can compute $\mathcal{U}$ for Step~\ref{stepf:u} (this time, $G[U]$ should be connected rather than each component being adjacent to $S$). Then, deciding if $\lambdastar_G(S,T) < \lambdastar_G(S, T \cup V(\mathcal{U}))$ for Step~\ref{stepf:u:a} can be done in $\Oh(k(n+m))$ time by Theorem~\ref{thm:prelim:ford} and Observation~\ref{obs:prelim:restricted-translate} since $\lambdastar_G(S,T) \leq k$. Finding $U \in \mathcal{U}$ such that $\lambdastar_G(S,T \cup U) > \lambdastar_G(S, T)$ can be done in $\Oh(n^{||\ff||} \cdot k(n+m))$ time. If Step~\ref{stepf:u:a} does not apply, then automatically Step~\ref{stepf:push} does and so we get $\lambdastar_G(S,T) = \lambdastar_G(S, T \cup V(\mathcal{U}))$. 
Finally, computing the farthest left-restricted minimum $(S, T \cup V(\mathcal{U})$)-separator can be done in $\Oh(k(n+m))$ time by Lemma~\ref{lem:closestfarthestsep} and Observation~\ref{obs:prelim:restricted-translate}. It is easy to see the steps above can be carried out using polynomial space.
\end{claimproof}

This concludes the proof of \cref{thm:secluded-F-free-enum}.
\end{proof}

\section{Applications}\label{sec:applications} 

As applications of Theorem~\ref{thm:secluded-F-free-enum}, we derive faster algorithms for two problems studied in the literature. The first problem is formally defined as follows~\cite{GolovachHLM20} for any finite set~$\mathcal{F}$ of undirected graphs.

\defparproblem{\textsc{Connected $k$-secluded $\mathcal{F}$-free subgraph}}
{Graph~$G$, integer~$k$, weight function $w \colon V(G) \to \mathbb{Z}_{> 0}$}
{$k$}
{Find a connected $k$-secluded set~$C \subseteq V(G)$ for which~$G[C]$ is $\mathcal{F}$-free which maximizes~$\sum_{v \in C} w(v)$.}

A single-exponential algorithm for this problem follows easily from~\cref{cor:secluded-F-free-enum-all}.

\corConnectedFFree*
\begin{proof}
Since the weights are positive, any maximum-weight solution to the problem is seclusion-maximal with respect to being $k$-secluded, connected, and $\mathcal{F}$-free. We can therefore solve an instance~$(G,\jjh{k},w)$ as follows. 
\mic{Invoke \cref{cor:secluded-F-free-enum-all} to} enumerate a superset of the all seclusion-maximal connected $\mathcal{F}$-free $k$-secluded induced subgraphs containing~$S := \{v\}$. For each enumerated set~$C$, check whether it is indeed~$\mathcal{F}$-free in time~$n^{||\mathcal{F}||+\Oh(1)}$. The heaviest, taken over all choices of~$v$ and~$C$, is given as the output.
\end{proof}
\newpage

Our second application concerns deletion problems to scattered graph classes, which are defined for finite sequences~$(\Pi_1, \ldots, \Pi_d)$ of graph classes.

\defparquestion{\textsc{$(\Pi_1, \ldots, \Pi_d)$-deletion}}
{Graph~$G$ and integer~$k$.}
{$k$}
{Is there a vertex set~$X \subseteq V(G)$ of size at most~$k$, such that for each connected component~$C$ of~$G-X$ there exists~$i \in [d]$ such that~$C \in \Pi_i$?}

By exploiting the fact that each connected component of~$G-X$ is $k$-secluded, we can obtain single-exponential FPT algorithms for this problem when each graph class~$\Pi$ is characterized by a finite number of forbidden induced subgraphs. In the following statement, both $\Oh_\Pi$'s hide factors depending on the choice of~$(\Pi_1, \ldots, \Pi_d)$.

\thmInducedScattered*
\begin{proof}
We describe an algorithm for the problem. If $k < 0$, report that it is a no-instance. If there is a connected component that belongs to $\Pi_i$ for some $i \in [d]$, then delete the component and continue (\cite[Reduction Rule 1]{JacobKMR23}). If the graph becomes empty, return that it is a yes-instance. Otherwise, if $k = 0$, report that it is a no-instance. 

In the remainder we have $k > 0$ and $G$ non-empty. Pick a vertex $v \in V(G)$. There are two cases; $v$ either belongs to the solution set $X$, or belongs to a component in $G-X$ in some graph class $\Pi_i$. We perform branching to cover both options. For the first option, recursively call the algorithm on $G-v$ searching for a solution of size $k-1$. For the second option, for each $i \in [d]$ and $s \in [k]$, apply Theorem~\ref{thm:secluded-F-free-enum} to enumerate (a superset of) the seclusion-maximal connected $\mathcal{F}_i$-free $s$-secluded subgraphs containing $v$. Note that the theorem implies this output has at most $c^s$ elements for some constant $c$.
For each of the enumerated subgraphs $C$ such that $G[C] \in \Pi_i$ and $|N_G(C)| = s$, recursively call the algorithm on $G-N_G[C]$ searching for a solution of size $k - |N_G(C)|$. Output yes if and only if one of the recursive calls results in a yes-instance.

For correctness of the algorithm, we argue that the enumeration of seclusion-maximal secluded subgraphs suffices. Suppose there is a solution $X$ not containing $v$ such that the component $C$ containing $v$ in $G-X$ belongs to $\Pi_i$. If~$C$ was among the output of the enumeration algorithm, it is easy to see the algorithm is correct. Suppose that~$C$ was not enumerated because it is not seclusion-maximal. For this choice of $i$ and $s = |N_G(C)|$, the enumeration included some connected $\mathcal{F}_i$-free $s$-secluded subgraph $C'$ with $C \subseteq C'$ and~$|N_G(C')| \leq |N_G(C)|$. Since the target graph classes are hereditary and graph $G - N_G[C]$ admits solution $X \sm N_G(C)$ of size at most $k - |N_G(C)|$, then its induced subgraph $G - N_G[C']$ admits a solution $X'$ of size at most $k - |N_G(C')|$.
Hence, $X' \cup N_G(C')$ is also a valid solution for $G$ of size at most $k$.
We conclude that the branching algorithm always finds a solution if there is one. 

We turn to the running time. Let $T(k)$ denote the number of leaves in the recursion tree for a call with parameter $k$, where $T(0) = 1$. By grouping the secluded subgraphs by their neighborhood size, observe that this satisfies $T(k) = T(k-1) + d\cdot \sum_{i=1}^k c^i \cdot T(k-i) \leq (d+1) \cdot \sum_{i=1}^k c^i \cdot T(k-i)$ (the inequality clearly holds if $c \geq 1$). By induction we argue that $T(k) \leq ((d+1)2c)^k$, which trivially holds if $k=0$. Suppose that it holds for all values below~$k$; then we derive:

\begin{align*}
    T(k) &\le (d+1) \cdot \sum_{i=1}^k c^i \cdot T(k-i) & \mbox{By grouping on neighborhood size.} \\
    &\le (d+1) \cdot \sum_{i=1}^k c^i \cdot \left ((d+1)2c \right)^{k-i} & \mbox{By induction.} \\
    &\leq \left ((d+1)c \right)^k \cdot \sum_{i=1}^k 2^{k-i} & \mbox{Using~$(d+1)^{k-i}\leq (d+1)^{k-1}$.} \\
    &\leq ((d+1)2c)^k. & \mbox{Since~$\sum_{i=0}^{k-1} 2^i < 2^k$.}
\end{align*}

Since the depth of the recursion tree is at most $k$, the recursion tree has at most $k \cdot ((d+1)2c)^k$ nodes. Finally we consider the running time per node of the recursion tree. Finding the connected components can be done in $\Oh(n+m)$ time. Checking if one of them belongs to $\Pi_i$ for some $i \in [d]$ can be done in $n^{\Oh_\Pi(1)}$ time.
The time needed for the $d \cdot k$ calls to Theorem~\ref{thm:secluded-F-free-enum} is $dk \cdot 2^{\Oh_\Pi(k)}\cdot n^{\Oh_\Pi(1)}$. Since $d$ and $c$ are constants, we get the claimed running~time.

Note that since Theorem~\ref{thm:secluded-F-free-enum} uses polynomial space, and we process its output one at a~time without storing it, we conclude that the described algorithm uses polynomial space.
\end{proof}

\section{Conclusion} \label{sec:conclusion}

{We have introduced a new algorithmic primitive based on secluded connected subgraphs which generalizes important separators.}
The high-level idea behind {the algorithm} is \emph{enumeration via separation}: by introducing an artificial set~$T$ and considering the more general problem of enumerating secluded subgraphs containing~$S$ but disjoint from~$T$, we can analyze the progress of the recursion in terms of the size of a minimum (left-restricted)~$(S,T)$-separator. We expect this idea to be useful in scenarios beyond the one studied here.

We presented a single-exponential, polynomial-space FPT \jjh{algorithm} to enumerate the family of seclusion-maximal connected $\mathcal{F}$-free subgraphs for finite~$\mathcal{F}$, making it potentially viable for practical use~\cite{PilipczukZ18}. 
The combination of single-exponential running time and polynomial space usage sets our approach apart from others such as recursive understanding~\cite{ChitnisCHPP16,CyganKLPPSW21,LokshtanovR0Z18} and treewidth reduction~\cite{MarxR14}. Algorithms exploiting half-integrality of the linear-programming relaxation or other discrete relaxations also have these desirable properties, though~\cite{CyganPPW13,Guillemot11a,IwataWY16,IwataYY18,Xiao10}. Using this approach, Iwata, Yamaguchi, and Yoshida~\cite{IwataYY18} even obtained a \emph{linear-time} algorithm in terms of the number of vertices~$n$, solving (vertex) \textsc{Multiway Cut} in time~$2^k \cdot k \cdot (n + m)$. At a high level, there is some resemblance between their approach and ours. They work on a discrete relaxation of deletion problems in graphs which are not standard LP-relaxations, but are based on relaxations of a \emph{rooted} problem in which only constraints involving a prescribed set~$S$ are active. This is reminiscent of the fact that we enumerate secluded subgraphs containing a prescribed set~$S$. Their branching algorithms are based on the notion of an extremal optimal solution to the LP relaxation, which resembles our use of the farthest minimum left-restricted~$(S,T)$-separator. However, the two approaches diverge there. To handle problems via their approach, they should be expressible as a 0/1/ALL CSP. Problems for which the validity of a solution can be verified by unit propagation (such as \textsc{Node Unique Label Cover}, \textsc{Node Multiway Cut}, \textsc{Subset} and \textsc{Group Feedback Vertex Set}) belong to this category, but it seems impossible to express the property of being~$\mathcal{F}$-free for arbitrary finite sets~$\mathcal{F}$ in this framework.

\bmp{The branching steps underlying our algorithm were informed by the structure of the subgraphs induced by certain vertex sets. In the considered setting, where certain possibly disconnected structures are not allowed to appear inside~$C$, it is necessary to characterize the forbidden sets in terms of the graph structure they induce. But when the forbidden sets are connected, we believe our proof technique can be used in a more general setting to establish the following. For any~$n$-vertex graph~$G$, non-empty vertex set~$S \subseteq V(G)$, potentially empty~$T \subseteq V(G) \setminus S$, integer~$k$, and collection~$F_1, \ldots, F_m \subseteq V(G)$ of vertex sets of size at most~$\ell$ which are connected in~$G$, the number of $k$-secluded induced subgraphs~$G[C]$ which are seclusion-maximal with respect to being connected, not containing any set~$F_i$, and satisfying~$S \subseteq C \subseteq V(G) \setminus T$, is bounded by~$(2+\ell)^{\Oh(k)}$, and a superset of them can be enumerated in time~$(2+\ell)^{\Oh(k)} \cdot m \cdot n^{\Oh(1)}$ and polynomial space. 
\mic{The reason why dealing with general connected obstacles is feasible is that whenever $F_i \cap C \ne \emptyset$ then also $F_i \cap N(C) \ne \emptyset$; this allows us to always make progress using the simpler branching strategy without keeping track of partial forbidden graphs.}
The corresponding generalization for \emph{disconnected} vertex sets~$F_i$ is false, even for~$|F_i| = 2$. To see this, consider a graph consisting of a cycle on~$2m+1$ vertices consecutively labeled~$s, a_1, \ldots, a_m, b_1, \ldots, b_m$ with~$F_i = \{a_i, b_i\}$ for each~$i \in [m]$, in which the number of relevant seclusion-maximal 2-secluded sets containing~$s$ is~$\Omega(m)$.}

We leave it to future work to consider generalizations of our ideas to \emph{directed graphs}. Since important separators also apply in that setting, we expect the branching step in terms of left-restricted minimum separators to be applicable in directed graphs as well. However, there are multiple ways to generalize the notion of a connected secluded induced subgraph to the directed setting: one can consider weak connectivity, strong connectivity, or a rooted variant where we consider all vertices reachable from a source vertex~$x$. Similarly, one can define seclusion in terms of the number of in-neighbors, out-neighbors, or both.

\bibliography{bib}

\end{document}